\newtheorem{theorem}{Theorem}
\newtheorem{proposition}[theorem]{Proposition}
\newtheorem{lemma}[theorem]{Lemma}
\newtheorem{corollary}[theorem]{Corollary}
\newtheorem{example}[theorem]{Example} 
\newtheorem{definition}[theorem]{Definition} 
\newcommand{\fillBox}{\hfill$\Box$}
\newcommand{\fig}{Figure}
\newcommand{\BL}{\logic{BL}}
\newcommand{\dfItrFmt}{\dmId}
\newcommand{\dfFrmFmt}{\dmId}
\newcommand{\dfWldFmt}{\dmId}
\newcommand{\dfLabFmt}{\dmId}
\dmEOP\dmEOL{}
\dmEOP\dmEOL{\dmIfEmpty{#1}{}{\mathop{:}\df{#1}}}
\newcommand{\TBL}{\df{TBL}}
\newcommand{\tb}{\df{tb}}
\newcommand{\tbb}{\df{tbb}}
\newcommand{\ecl}[2][]{\dmSub{\df{[ #2 ]}}{\df{#1}}}
\newcommand{\ter}[3]{\df{RelS}(#1,#2,#3)}
\newcommand{\terdf}[3]{\df{RelS}(\df{#1},\df{#2},\df{#3})}
\newcommand{\domFun}{D}
\newcommand{\dom}[1]{\domFun(#1)}
\newcommand{\realFun}{\rho}
\newcommand{\real}[1]{\realFun(\df{#1})}
\newcommand{\bifL}[3]{\leqL{#1}{\sepL{#2}{#3}}}
\newcommand{\bifS}{\terdf}
\newcommand{\mimp}{\df{wand}}
\title{Bifurcation Logic: Separation Through Ordering}
\author{Didier Galmiche
\institute{Universit\'e de Lorraine, CNRS, LORIA\\
F-54000 Nancy, France}
\email{didier.galmiche@loria.fr}
\and
Timo Lang
\institute{University College London\\ London, UK}
\email{timo.lang@ucl.ac.uk} 
\and
Daniel M\'ery
\institute{Universit\'e de Lorraine, CNRS, LORIA\\
F-54000 Nancy, France}
\email{daniel.mery@loria.fr}
\and
David Pym 
\institute{UCL and Institute of Philosophy\\
University of London, UK}
\email{d.pym@ucl.ac.uk, david.pym@sas.ac.uk}
}
\begin{document}

\maketitle  

\begin{abstract}
We introduce Bifurcation Logic, ${\BL}$, which combines a basic classical modality with separating conjunction, $\ast$, together with its naturally associated multiplicative implication, $\mimp$, that is defined using the modal ordering. Specifically, a formula $\phi_1 \ast\, \phi_2$ is true at a world $w$ if and only if each $\phi_i$ holds at worlds $w_i$ that are each above $w$, on separate branches of the order, and have no common upper bound. We provide a labelled tableaux calculus for $BL$ and establish soundness and completeness relative to its relational semantics. The standard finite model property fails for $BL$. However, we show that, in the absence of $\mimp$, but in the presence of $\ast$, every model has an equivalent finite representation and that this is sufficient to obtain decidability. We illustrate the use of ${\BL}$ through 
%examples, 
%including a treatment of Separation Logic, a general treatment of partitioning, and multi-agent access control. 
an example of modelling multi-agent access control that is quite generic in its form, suggesting many applications.
\end{abstract}

\section{Introduction} \label{sec:introduction}

We introduce Bifurcation Logic, ${\BL}$, which combines a basic classical modality with separating conjunction, $\ast$, together with its naturally associated multiplicative implication, $\mimp$, that is defined using the modal ordering. We provide a relational semantics and a labelled tableaux calculus for ${\BL}$ and establish soundness and completeness relative to ${\BL}$'s relational semantics. The standard finite model property fails for $BL$. However, we show that, in the absence of $\mimp$, but in the presence of $\ast$, every model has a finite representation and that this is sufficient to obtain decidability. We illustrate the use of ${\BL}$ in logical modelling through 
%examples, 
%including a treatment of Separation Logic, a general treatment of partitioning, and multi-agent access control. 
an example of multi-agent access control. 

% \begin{itemize}
%     \item Relevance
%     \item BI 
%     \item Separation logic 
%     \item Linear Logic 
%     \item Ordering instead of monoids 
% \end{itemize}
% these items are copied from a previous version 
% \begin{itemize}
%     \item Opening words
%     \item Semantic definition, with examples
%     \item Distinctions (not BI, etc.)
%     \item Not intuitionistic (persistence)
% \end{itemize}

The key property of interest in ${\BL}$ is the semantics of its multiplicatives, the `separating' $\ast$ and $\mimp$, which is given in terms of the ordering that is used to define the classical modality. This stands in contrast to the set-up in, say, bunched implications (${BI}$ --- e.g., \cite{OP99,POY2004,GMP2005}) in which specific relational structure is used for their definition --- see \cite{GheorghiuPym2023} for a through discussion of $BI$'s semantics. 
Kamide's account of Kripke semantics for modal substructural logics \cite{Kamide2002} also employs a binary operation on worlds to give a treatment of the multiplicative conjunction that is similar to that  of $BI$'s elementary semantics (e.g., \cite{OP99,POY2004,GMP2005,GheorghiuPym2023}). Galmiche, Kimmel, and Pym \cite{GKP2020} consider an epistemic modal extension of boolean $BI$ in which the semantics of the multiplicative conjunction employs a monoidal product on worlds. Do\v{s}en \cite{Dosen1992} considers a range of issues in the relationship between modal and substructural logics from the perspective of translations between proof systems, and Ono \cite{Ono2019} has also considered the proof theory of modal and substructural logics. 

The basic idea in $BL$ is that a formula $\phi_1 \ast\, \phi_2$ is true at a world $w$ if and only if each $\phi_i$ holds at worlds $w_i$ that are each above $w$, on separate branches of the order, and have no common upper bound --- that is, they are bifurcated.  Consequently, the semantics of the multiplicative implication has the property that the implicational formula $\phi \mimp \psi$ and its subformula $\phi$ are required to hold at bifurcated worlds above the world at which $\psi$ holds. This use of this feature is illustrated in a substantive modelling example given in Section~\ref{sec:modelling}.  The semantics of the classical connectives and modality is standard.   

In Section~\ref{sec:bifurcation-logic}, we introduce the language of Bifurcation Logic and its models, based on frames with a ternary relation structure for the bifurcation semantics. In Section~\ref{sec:modelling}, we give an extended, quite generic --- i.e., evidently mappable to other settings --- 
example of the use of ${\BL}$ in modelling access control, suggesting wider application in knowledge representation and reasoning. This example, albeit somewhat idealized, illustrates the interaction between the classical modality and the multiplicative connectives, especially the somewhat unusual semantic form of the implication, in a simple and direct way. We also discuss some related work in this section. 

In Section~\ref{sec:tableaux}, we give a system of labelled tableaux for ${\BL}$. The form of the calculus follows the pattern established in, for example,  \cite{GMP2005,GKP2020} and allows, in Section~\ref{sec:SandC}, soundness and completeness results to be established (cf.~\cite{GMP2005,GKP2020}). The proofs are provided in the appendix.  

While the usual finite model property fails for $\BL$, a modified form of it does hold. Section~\ref{sec:FMP} explains, through a counterexample, why the standard form fails and introduces a modified form through the concept of  `model with back links'. Intuitively, ${\BL}$ is a logic with the subformula property --- in the sense evident from the tableaux system --- and is about paths in finitely branching trees. Back links describe how the paths go back to already-seen configurations (see Section~\ref{sec:FMP} for a formal explanation). Using this modified finite model property, the decidability of ${\BL}$ is obtained. 

Finally, before proceeding to our formal development, we consider a few interesting outstanding questions, among many others,  for further work: 
%\section{Discussion} \label{sec:discussion}
\begin{itemize}[leftmargin=6mm,label=--]\setlength\itemsep{-0.5mm}
    \item we would aim to give a (Hilbert-type) axiomatization of ${\BL}$; 
    \item we would explore natural deduction and sequent calculus presentations of ${\BL}$; 
    \item we would seek to establish the complexity of deciding ${\BL}$; 
    \item we would seek to explore the addition to BL of multi-agent and epistemic modalities, as well as quantifiers, so extending its potential as a modelling tool 
    (cf. %in the sense of 
    \cite{GKP2020}, for example). 
\end{itemize}
The question of whether there is an interesting intuitionistic version of ${\BL}$ seems quite challenging, as it would combine the difficulties of intuitionistic modal logics \cite{Simpson94} with the need to handle the multiplicatives 
in a coherent way.

\section{Bifurcation Logic} \label{sec:bifurcation-logic}

In this section, we introduce Bifurcation Logic, $\BL$, by giving a definition in terms of ternary relational semantics in the style of Routley-Meyer \cite{RM73}, with some similarity to the work of Fuhrmann and Mares  \cite{FuhrmannMaresOnS}.

\begin{definition}[Language]
  Let $\df{Props}$ be a countable set of propositional letters.
  The formulae of~$\BL$, the set of which is denoted by $\df{Forms}$, are given by the following grammar: 
  \[
  \df{ A ::= p \, ( in Props ) | not A | A and A | boxm A | A star A | A wand A }
  \]
  The connectives $\df{or}$, $\df{imp}$, $\df{eqv}$, $\df{diam}$, and the units $\df{true}$, $\df{false}$ are defined as follows:
  % \[
  %   %\pushQED
  %   \begin{array}[b]{c}
      $\df{A or B = not ( not A and not B )}$, 
      $\df{A imp B = not A or B}$, 
      $\df{A eqv B = ( A imp B ) and ( B imp A )}$,  
     $\df{true = A or not A }$, 
     $\df{false = A and not A }$, 
     $\df{diam A = not boxm not A}$. \fillBox 
  %   \end{array} 
  %    %\pullQED 
  % \] \fillBox
\end{definition}

To minimize the use of parentheses; we use the following strict order of precedence (with right associativity): 
$
  \df{ boxm , diam , not \;>\; star \;>\; and , or \;>\; wand \;>\; imp \;>\; eqv}.
$

\begin{definition}[Tree]\label{def:tree}
  Let $\df{( Wlds , <=S )}$ be a partial order. 
  Two elements~$\df{wld1 , wld2 in Wlds}$ are \emph{separated (or disjoint)}, denoted $\sepS{wld1}{wld2}$, if neither $\df{wld1 <=S wld2}$ nor $\df{wld2 <=S wld1}$.
  We call $\df{( Wlds , <=S )}$ \emph{rooted} if there exists $\df{wld in Wlds}$ such that $\df{ wld <=S wld'}$ for all~$\df{wld' in Wlds}$.
  $\df{( Wlds , <=S )}$ has the \emph{persistent separation property} if it satisfies the following condition:
  \[
    (P) \quad \df{ forall. wld1 , wld2 , wld1' in Wlds , 
       .if. \sepS{wld1}{wld2} .and. wld1 <=S wld1' , .then. \sepS{wld1'}{wld2}%
    }.
  \]
  $\df{( Wlds , <=S )}$ is called a \emph{tree} if it is rooted %, acyclic 
  and has the persistent separation property. %\timo{partial orders are always acyclic}
  \fillBox
\end{definition}

\begin{definition}[Frame]\label{def:frame}
A $\BL$ \emph{frame} is a structure $\df{FR = ( Wlds , <=S , RelS )}$, where $\df{( Wlds , <=S )}$ is a tree of elements called \emph{worlds} and $\df{RelS}$ is the ternary relation on worlds defined as follows:
\[
  (B) \quad \df{forall. wld , wld1 , wld2 in Wlds , \;
            \bifS{wld}{wld1}{wld2} .iff. \leqS{wld}{wld1} , \leqS{wld}{wld2} .and. \sepS{wld1}{wld2}
  }.
\]
That is, $\df{wld1}$ and $\df{wld2}$ belong to distinct futures of $\df{wld}$. 
\fillBox
\end{definition}

% \begin{definition}[Frame] \label{def:old-frame}
% A $\BL$ \emph{frame} is a partially ordered set $\df{FR = ( Wlds , <=S , RelS )}$, where $\df{Wlds}$ is a set of elements called \emph{worlds} and $\df{RelS}$ is a ternary relation on worlds, defined as
% $\terdf{wld}{wld1}{wld2}$ iff $\df{wld <=S wld1 , wld <=S wld2 , wld1 /<=S wld2 , wld2 /<=S wld1}$
% and such that
% \[
% \terdf{wld}{wld1}{wld2} \df{.and.} \df{wld1 <=S wld3} \text{ imply } \terdf{wld}{wld3}{wld2}
% \]
% That is, $\df{wld1}$ and $\df{wld2}$ belong to distinct futures of $\df{wld}$. 
% \fillBox
% \end{definition} 

 \begin{definition}[Model] \label{def:model}
 A $\BL$ \emph{model} is a triple 
 $\df{MO} = (\df{FR}, \df{ValS}, \df{||-})$,
  where $\df{FR}$ is a $\BL$ frame
  and $\df{ValS}$ is a \emph{valuation function} from $\df{Wlds}$ to $\wp(\df{Props})$.
  The satisfaction relation $\df{||-}$ is inductively defined as
  the smallest relation on $\df{Wlds} \times \df{Forms}$ such that 
\[
\begin{array}{r@{\;\;}c@{\;\;}l}
\df{MO , wld ||- p} & \df{.iff.} & \df{p in \ValS{wld}} \\ 
%& & \\
\df{MO , wld ||- not A} & \df{.iff.} & \df{MO , wld /||- A} \\ 
\df{MO , wld ||- A and B} & \df{.iff.} & \df{MO , wld ||- A .and. MO , wld ||- B} \\ 
\df{MO , wld ||- boxm A} & \df{.iff.} &\df{forall. wld' in Wlds , .if. wld <=S wld' .then. MO , wld' ||- A} \\ 
%& & \\ 
\df{MO , wld ||- A star B} & \df{.iff.} & \df{forsome. wld1 , wld2 in Wlds .such. \terdf{wld}{wld1}{wld2}}, \\
                          & & \df{MO , wld1 ||- A .and. MO , wld2 ||- B} \\ 
\df{MO , wld ||- A wand B} & \df{.iff.} & \df{forall. wld1 , wld2 in Wlds .such. \terdf{wld2}{wld}{wld1}}, \\  
                          & & \df{if. MO , wld1 ||- A .then. MO , wld2 ||- B}
\end{array}
\]
A formula $\df A$ is \emph{satisfied in a model $\df{MO}$}, denoted $\df{MO ||- A}$, if $\df{MO , wld ||- A}$ for all worlds~$\df{wld}$ in~$\df{MO}$.
We write $\df{wld ||- A}$ instead of $\df{MO , wld ||- A}$ whenever the model is clear from the context.
$\df A$ is \emph{satisfiable} if it is satisfied in some model $\df{MO}$, and \emph{valid}, denoted $\df{||- A}$, if it is satisfied in all models. \fillBox
\end{definition}

\dfAdd{wldA}{\df{wld1}}
\dfAdd{wldI}{\df{wld2}}
\dfSetVal{mtrue}{\mathop{\top^*}}

\dfAdd{loc}{\ell}

\newenvironment{blgraph}[1][]{%
  \begin{tikzpicture}[%
    level distance=3em, 
 %   level/.style={sibling distance=2em},
 %   label distance=-2pt,
    edge from parent/.style={draw,->},#1%
   ]
}{%
  \end{tikzpicture}%
}
\newcommand{\blnode}[3]{%
  node (#1) [label=left:$#2$,label=right:$#3$]{$\bullet$}
}

\begin{figure}[ht]
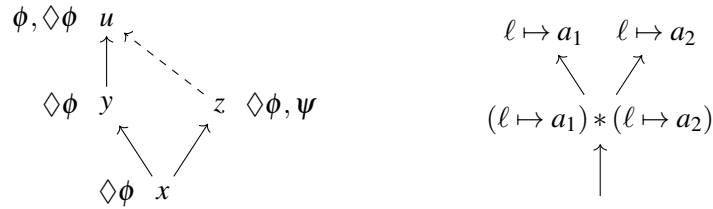

\hrule
\vspace{2mm}
\begin{center}
\begin{tabular}{c@{\hspace{2cm}}c}
  \begin{blgraph}
   \node [label=left:$\df{diam A}$] {$\df{x}$} [grow'=up]
      child {node (y) [label=left:$\df{diam A}$] {$\df{y}$}
          child {node (u) [label=left:$\df{A , diam A}$] {$\df{u}$}}
        }
      child {node (z) [label=right:$\df{diam A , B}$] {$\df{z}$}
      };
      \draw[dashed,->] (z) -- (u);
  \end{blgraph}
  &
  \begin{blgraph}
   \node {} [grow'=up]
           child {  node (x) {$\df{( loc |-> a1 ) star ( loc |-> a2 )}$}
             child { node (y) {$\df{loc |-> a1}$} }
             child { node (z) {$\df{loc |-> a2}$} }
           };
     
  \end{blgraph}
\end{tabular}
\end{center} \vspace{-5mm}
\caption{Examples of $\BL$ structures}
\label{fig:bl-examples}
\vspace{2mm}
\hrule
\end{figure}
$\BL$ uses frames that obey the persistent separation property as we believe that they better correspond to an intuitive understanding of bifurcation.
%that $\terdf{wld}{wld1}{wld2}$ and $\df{wld1 <=S wld3}$ imply $\terdf{wld}{wld3}{wld2}$;  that is, when a bifurcation occurs the bifurcated futures remain separated forever.
We then define non-persistent (or lax) $\BL$ as the extension of~$\BL$ that deals with frames that are not required to obey the persistent separation property.
For example, the formula $\varphi = \df{( diam A star B ) eqv ( A star B )}$ is valid in~$\BL$ (a tableau proof is given in Section~\ref{sec:tableaux}), but it is not valid in lax~$\BL$ as it can be falsified in the direct acyclic graph (DAG) given on the left-hand side of \fig~\ref{fig:bl-examples}.
Indeed, we have $\df{x ||- diam A star B}$ because we have $\terdf{x}{y}{z}$, $\df{y ||- diam A}$ and $\df{z ||- B}$, but we do not have $\df{x ||- A star B}$ because the only world satisfying~$\df{A}$ is~$\df{u}$ and~$\df{u}$ does not belong to a distinct future of $\df{z}$, which is the only world satisfying~$\df{B}$.

\begin{proposition}
BL is a conservative extension of the modal logic $S4$. 
\end{proposition}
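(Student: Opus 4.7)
The plan is to show that, for every formula $\phi$ in the $\logic{S4}$-fragment (built only from propositional variables, $\df{not}$, $\df{and}$, and $\df{boxm}$), $\phi$ is $\BL$-valid iff $\phi$ is $\logic{S4}$-valid.

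The forward direction is immediate: given a $\BL$ model, the reduct $(\df{Wlds}, \df{<=S}, \df{ValS})$ is an $\logic{S4}$ model because $\df{<=S}$ is reflexive and transitive, and the $\BL$-clauses for $p$, $\df{not}$, $\df{and}$ and $\df{boxm}$ match the $\logic{S4}$-clauses verbatim. An induction on modal formulas shows that $\BL$- and $\logic{S4}$-satisfaction coincide on the $\logic{S4}$-fragment, so every $\logic{S4}$-validity is a $\BL$-validity.

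For the converse, given an $\logic{S4}$-countermodel $(W,R,V)$ with $w_0 \not\Vdash \phi$, I would pass to its standard tree unravelling $\mathcal{T}$: worlds are finite $R$-paths $\langle w_0, w_1, \ldots, w_n\rangle$ from $w_0$, ordered by the reflexive-transitive closure of one-step extension, with $V'(\sigma) = V(\mathrm{last}(\sigma))$ and ternary relation induced as in Definition~\ref{def:frame}. This is a rooted partial order in which every principal downset is the totally ordered set of prefixes of a given path. From this, the persistent separation property reduces to an easy prefix argument: if $\sigma_1 \parallel \sigma_2$ and $\sigma_1 \leq \sigma_1'$, then $\sigma_1'$ cannot be a prefix of $\sigma_2$ (else one of $\sigma_1, \sigma_2$ would be a prefix of the other), and $\sigma_2$ cannot be a prefix of $\sigma_1'$ (else $\sigma_1$ and $\sigma_2$ would be comparable); hence $\sigma_1' \parallel \sigma_2$. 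Thus $\mathcal{T}$ is a $\BL$ frame.

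Finally, the standard bisimulation witnessed by $\sigma \mapsto \mathrm{last}(\sigma)$ gives $\mathcal{T}, \sigma \Vdash \psi$ iff the original model satisfies $\psi$ at $\mathrm{last}(\sigma)$ for every modal $\psi$, so $\mathcal{T}, \langle w_0\rangle \not\Vdash \phi$ and $\phi$ is not $\BL$-valid. The only non-routine step is the persistent-separation check above, which takes care both of the potential failure of antisymmetry in the original preorder and of the need to produce a root; everything else adapts the standard $\logic{S4}$ unravelling to $\BL$'s vocabulary.
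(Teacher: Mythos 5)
Your proof is correct, and it is substantially more rigorous than the paper's own argument, which consists of the single observation that ``the only conditions imposed on the order relation are for the purpose of defining $\ast$ and $\mimp$.'' That one-liner implicitly relies on exactly the fact you prove: that restricting from arbitrary reflexive--transitive frames to $\BL$ frames (rooted partial orders with the persistent separation property) does not change the set of valid formulas in the $\Box$-fragment. You correctly identify that the easy direction is the frame-class inclusion, and that the non-trivial direction requires converting an arbitrary $S4$ countermodel into a $\BL$ frame; your unravelling into the prefix-ordered tree of $R$-paths handles rootedness, antisymmetry, and persistent separation in one stroke, and your prefix argument for persistent separation (the downset of any path is totally ordered, so incomparability propagates upward) is exactly the point that needs checking and is correct. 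The bisimulation via $\sigma\mapsto\mathrm{last}(\sigma)$ is standard and sound for the multiplicative-free fragment, since the ternary relation plays no role there. In short, where the paper asserts immediacy, you supply the standard filtration-free unravelling argument that actually justifies it; the only cost is length, and the only thing the paper's version ``buys'' is brevity at the price of leaving the completeness of $S4$ with respect to this particular frame class unargued.
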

\begin{proof}
This follows immediately from observing that the only conditions imposed on the order relation are for the purpose of defining $\ast$ and $\mimp$. 
\end{proof}

Although $\BL$ is a conservative extension of $S4$, it differs from both bunched and other separating logics.
First, we remark that since ${\BL}$ addresses separation as an ordering problem rather than a resource composition problem (as in, for example, bunched logics,  where~$\df{star}$ usually corresponds to a product in a monoid), we do not include a unit~$\df{mtrue}$ for the multiplicative conjunction~$\df{star}$.
Having the multiplicative unit~$\df{mtrue}$ would unnecessarily complicate our definition of the bifurcation relation or rule out many partial order structures.
Indeed, we would need to satisfy $\df{wld ||- A * mtrue}$ iff $\df{wld ||- A}$ for all worlds~$\df{wld}$ and all formulae~$\df A$. 
Hence, by definition of $\df{star}$, we would need worlds $\df{wldA}, \df{wldI}$ such that $\terdf{wld}{wldA}{wldI}$, $\df{wldA ||- A}$ and $\df{wldI ||- mtrue}$.
In particular, consider a $\BL$ frame with only one world~$\df{wld}$ and set $\df{A = true}$. We have $\df{wld ||- true}$, but not $\df{wld ||- true star mtrue}$ because $\terdf{wld}{wld}{wld}$ is impossible to achieve as it would imply both $\df{wld <=S wld}$ and $\df{wld /<=S wld}$ by definition of~$\df{RelS}$.

Second, $\BL$ also differs from Separation Logic, as illustrated in the right-hand side of \fig~\ref{fig:bl-examples}. In Separation Logic \cite{Ishtiaq2001,Reynolds2002}, the built-in points-to predicate $\df{( loc |-> a )}$ intuitively denotes a memory heap with only one cell whose location (address) is~$\df{loc}$ and whose value is~$\df{a}$.
Heaps are defined as partial functions from locations to values and composition of heaps is given by the union of functions with disjoint domains.
Therefore, the formula $\df{( loc |-> a1 ) star ( loc |-> a2 )}$ is not satisfiable in Separation Logic as it denotes the disjoint composition of two one-cell heaps that share the same location~$\df{loc}$.
In $\BL$, as~$\df{star}$ represents bifurcation, a node satisfying $\df{( loc |-> a1 ) star ( loc |-> a2 )}$ simply implies that the location~$\df{loc}$ might have two distinct futures, one in which it points to the value~$\df{a1}$ and the other one in which it points to the value~$\df{a2}$.
More interestingly, the formula $\df{ boxm ( loc |-> a ) }$, when satisfied by some world~$\df{wld}$, would imply that in all possible futures of~$\df{wld}$, the location~$\df{loc}$ should point to the same value~$\df{a}$.
This would be useful, for example, to state that an interrupt vector
always points to the address where its legitimate handler resides.

\section{Modelling With Bifurcation Logic} \label{sec:modelling}

Logics can be used not merely to describe reasoning itself, but also to describe reasoning about systems. This use of logics as modelling tools has delivered significant advances in many areas --- too numerous to describe here --- including program analysis and verification, with one leading example, making essential use of multiplicative conjunction, being Separation Logic \cite{Ishtiaq2001,Reynolds2002}. Another highly effective example in the same spirit is Context Logic \cite{CGZ2005,CGZ2007}. More abstractly, substructural modal logics provide reasoning tools for models of systems
in the `distributed systems metaphor'  (e.g., \cite{AP16,CIP2021,GLP2024}). Demri and Deter 
\cite{DD2015} have surveyed connections between modality and separation, but they do not consider separating connectives defined through ordering. More detailed connections with Separation Logic, as  mentioned above, are beyond the scope of this paper;  so, instead of describing how connections with that might work, we give a quite generic example --- many examples involving obtaining, representing, and verifying knowledge and information will be very similar --- of how ${\BL}$'s modality and multiplicatives interact through an example in the context of multi-agent access control. 
% \begin{example}[A Pointer Logic] \label{ex:pointer-log} 
% \end{example}

% \begin{example}[General Partition]  \label{ex:gen-part}
% \end{example}

\begin{example}[Crimson Tide \cite{CrimsonTide} Part 1] \label{ex-crimson-tide}
The plot of the %motion picture 
film \emph{Crimson Tide} \cite{CrimsonTide} takes place mainly on board a United States Navy Ballistic Missile Submarine (the USS Alabama). During a period of international tension, the submarine 
receives an order to launch nuclear-armed missiles. For the release of the weapons to be authorized, a composite key must be authenticated:  
\begin{enumerate}
\setlength\itemsep{-0.5mm}
\item An order to release weapons is received in a message that contains a code.   
\item Two senior officers (neither the Captain nor the Executive Officer) must independently authenticate the message by verifying the code against a local authentication device. Authentication (think `necessarily correct') is denoted using $\Box$.  
\item This yields a composite --- $\Box \kappa_1 \ast \Box\kappa_2$ in Figure~\ref{fig:joint-key} --- of authenticated keys that is passed (verbally) to the Executive Officer and the Captain. 
\item The Executive Officer confirms that the correct protocol has been followed and so authenticates the composite key, thereby yielding $\Box(\Box\kappa_1 \ast \Box\kappa_2)$ in Figure~\ref{fig:joint-key}.
\item The Captain then confirms the authentication 
and may order the release of the weapons.
\end{enumerate}%\fillBox
\noindent This concludes the first part of our example. \fillBox

\begin{figure}[t]
\hrule
\vspace{2mm}
    \begin{center}
    \includegraphics[scale=0.25]{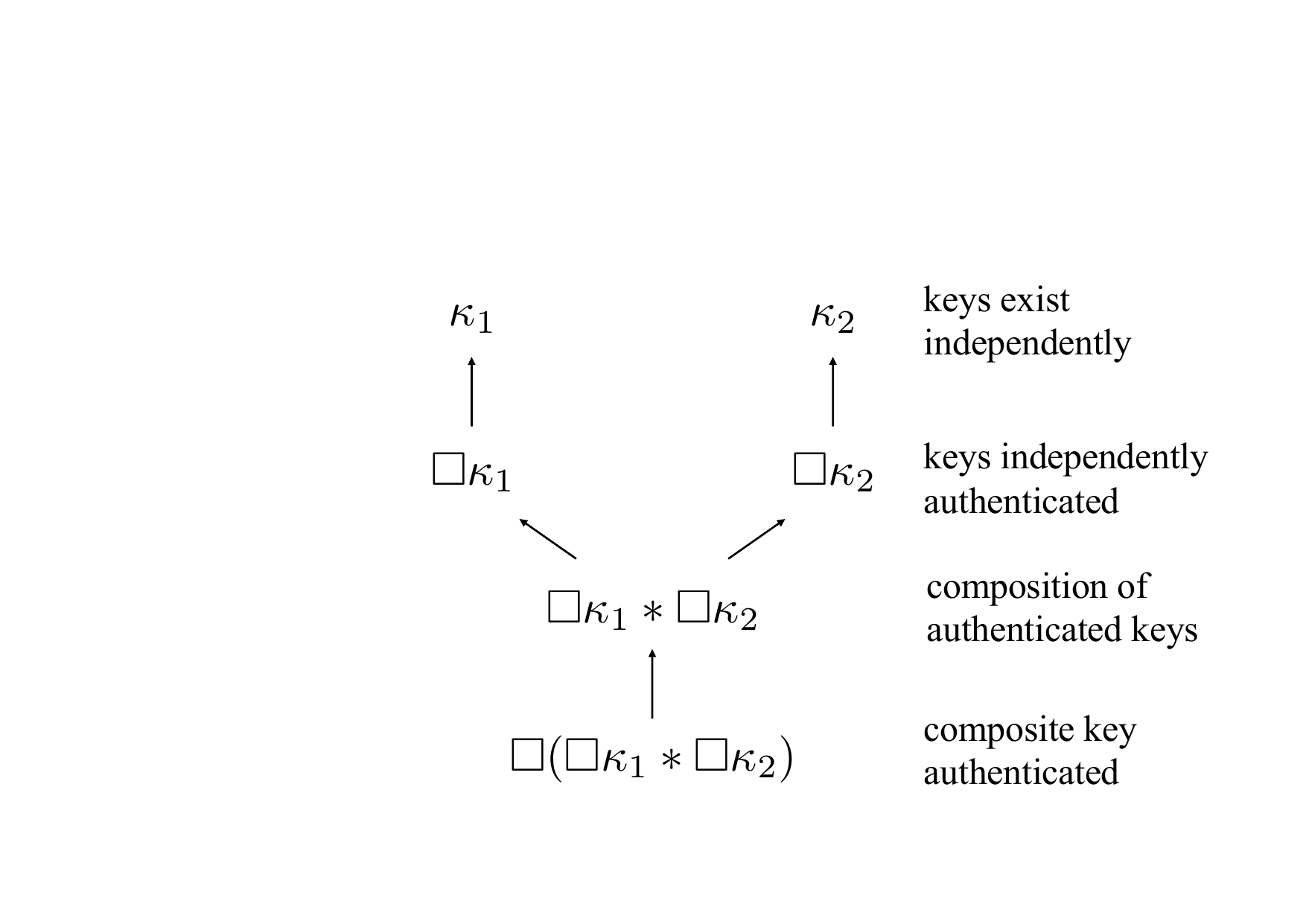}
    \end{center} \vspace{-5mm}
\caption{Joint access}
\label{fig:joint-key}
\vspace{2mm}
\hrule 
\end{figure}
\begin{figure}[h]
\hrule
\vspace{2mm}
    \begin{center}
    \includegraphics[scale=0.25]{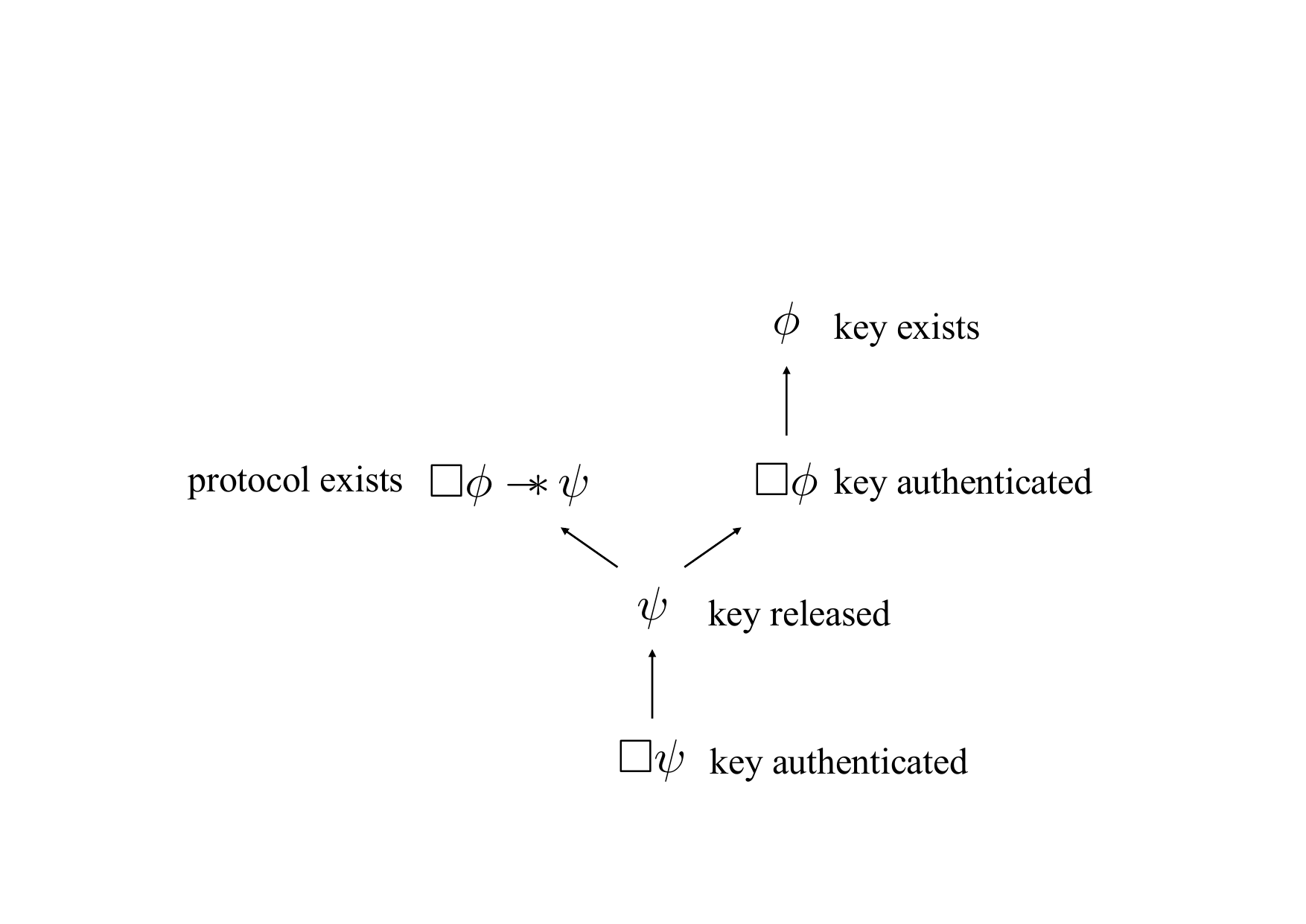}
    \end{center} \vspace{-5mm}
\caption{Protocol for obtaining a key}
\label{fig:protocol}
\vspace{2mm}
\hrule 
\end{figure}
\begin{figure}[ht]
\hrule
\vspace{2mm}
    \begin{center}
    \includegraphics[scale=0.25]{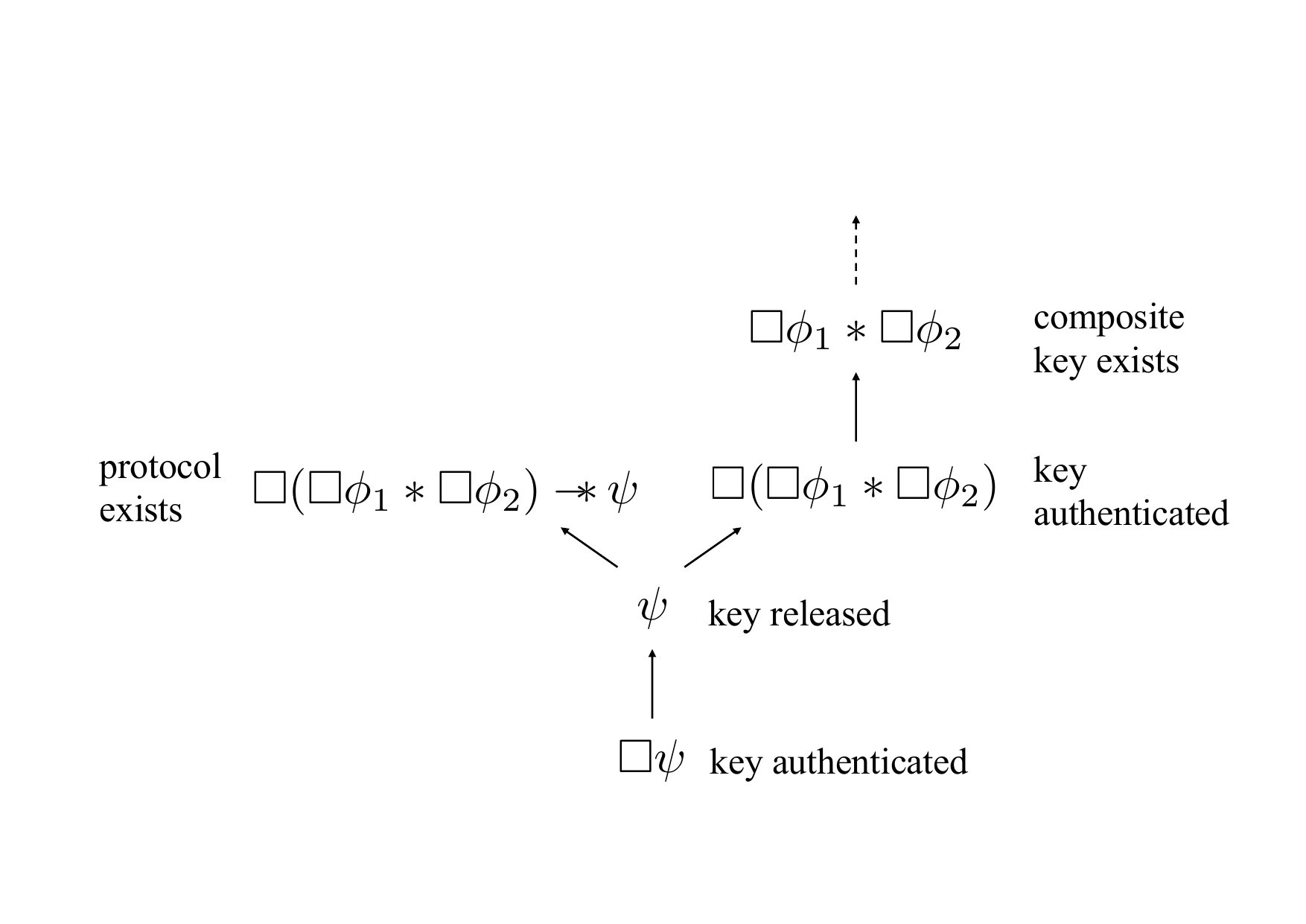}
    \end{center} \vspace{-5mm}
\caption{Protocol for obtaining a composite key}
\label{fig:protocol-comp}
\vspace{2mm}
\hrule
\end{figure}
%\noindent This concludes the first part of our example. \fillBox
\end{example}

\begin{example}[Crimson Tide \cite{CrimsonTide} Part 2] 
The use of a protocol can also be represented in Bifurcation Logic. This is illustrated in Figure~\ref{fig:protocol}. Here the idea is that a protocol is modelled by an implicational formula that, given an authenticated key,  yields a key, which may then need to be authenticated. (Informally, the implicational formula may be thought of as the type of a function that returns a key.) 
We can see how in an inessential, slightly more detailed, variant of the set-up,  Figure~\ref{fig:protocol-comp} represents the use of protocols in \emph{Crimson Tide} as follows: 
\begin{enumerate}[leftmargin=6mm]
\setlength\itemsep{-0.7mm}
\item In the discussion based on Figure~\ref{fig:joint-key}, the role of protocols is suppressed. 
\item Figure~\ref{fig:protocol} illustrates the use of protocols in general.  
\item In the case of our example, we represent the Captain's key permitting the release of weapons by $\psi$. For the key to be authenticated, it must have been released through the protocol. 
\item The protocol is given, in Figure~\ref{fig:protocol-comp}, by the implicational formula 
$\Box(\Box\phi_1 \ast \Box\phi_2) \mimp \psi$. That is, 
the authenticated composite key allows access to the Captain's key, $\psi$, which, when authenticated, allows the weapons to be released.  
\end{enumerate}

Here we use the modality $\Box$ to denote authentication, but what is the role of separation? The role of $\ast$ should be clear: it enforces the independence of multiple authentications. But what of $\mimp$? It ensures that there is no interference between the authentication of the composite key and its use to make available the Captain's key, $\psi$ (which must itself be authenticated for use). Note the essential use of the semantics of $\mimp$: first, the separation between worlds afforded by $\mimp$, as opposed to $\df{imp}$, ensures no interference between the existence of the protocol and existence of the (authenticated) key to which it applies. Second, the particular semantic form of this implication --- in that its application `looks back' down the ordering --- captures exactly 
the release of the Captains's key, $\psi$, through access to the 
authenticated composite key. \fillBox
\end{example}

% \begin{figure}[h]
% \hrule
% \vspace{2mm}
%     \begin{center}
%     \includegraphics[scale=0.3]{joint-key.pdf}
%     \end{center}
% \caption{Joint access}
% \label{fig:joint-key}
% \vspace{2mm}
% \hrule 
% \end{figure}
% \begin{figure}[h]
% \hrule
% \vspace{2mm}
%     \begin{center}
%     \includegraphics[scale=0.3]{protocol.pdf}
%     \end{center}
% \caption{Protocol for obtaining a key}
% \label{fig:protocol}
% \vspace{2mm}
% \hrule 
% \end{figure}
% \begin{figure}[h]
% \hrule
% \vspace{2mm}
%     \begin{center}
%     \includegraphics[scale=0.4]{protocol-composite.pdf}
%     \end{center}
% \caption{Protocol for obtaining a composite key}
% \vspace{2mm}
% \hrule
% \end{figure}
%\fillBox
%\end{example}

\section[A Tableaux Calculus for BL: TBL]{A Tableaux Calculus for ${\BL}$: $\TBL$} \label{sec:tableaux}

The tableaux calculus for classical propositional logic \cite{Fitt12} can be adapted systematically to calculi for many non-classical logics by the addition of labelling.
The basic idea is that the structure of a Kripke model for a given non-classical logic is used to define a tableaux calculus for that logic by reflecting its structure in an algebra of labels that is used to impose side-conditions on the tableaux rules. Through this mechanism, the basic classical and/or tableaux figures are modified to capture non-classical connectives. 

$\df{TBL}$ is presented in \fig~\ref{fig:tbl-rules}.
$\df{TBL}$ has logical rules that capture the meaning of the connectives, structural rules that capture the properties of $\BL$ models, and closure rules (whose conclusion is a cross mark) that capture (logical or structural) inconsistencies.
As usual, closure rules put an end to the expansion of a branch. %, thus preventing subsequent applications of any other rules.
$\df{TBL}$ can address either ${\BL}$ or its lax variant depending on the inclusion or not of the optional persistency rule $\ru{SepP}$.
%We remark that 
All the results for ${\BL}$ presented in this section  also hold for lax ${\BL}$.

\begin{definition}\label{def:labels-and-constraints}
Let $\df{Labs}$ be a countable set of symbols called \emph{labels}. 
A \emph{labelled formula} is a pair $\df{( A , x )}$, written $\lf{A:x}$, where $\df{A}$ is a formula and $\df{x}$ is a label.
A \emph{label constraint} is an expression of the form $\df{x <=L y}$, where $\df{x , y}$ are labels. \fillBox
\end{definition}

\begin{definition}\label{def:signed-formulae}
Let $\df{Sg}$ be the set $\SET{ \sg{T}, \sg{F} }$ of \emph{signs}.
A \emph{signed labelled formula} is a triple $( \sg{S}, \df{A}, \df{x} )$,
written $\slf{S A:x}$, where $\sg{S}$ is a sign and $\lf{A:x}$ is a labelled formula.
Similarly, a \emph{signed label constraint} is a label constraint prefixed with a sign.
\fillBox
\end{definition}

We define $\sdf{T x =L y}$ as a shorthand for the expression $\sdf{T x <=L y}, \sdf{T y <=L x}$.
Similarly, $\sdf{T x ><L y}$ is a shorthand for $\sdf{F x <=L y} , \sdf{F y <=L x}$ and $\sdf{T \bifL{x}{y}{z}}$ is a shorthand for $\sdf{T x <=L y} , \sdf{T x <=L z} , \sdf{T y ><L z}$.

% The rules of $\df{TBL}$ are presented in \fig~\ref{fig:tbl-rules}. 

\begin{definition}\label{def:tbl-tableau-for-A} 
  A \emph{tableau for $\lf{A:x}$} is a finitely branching rooted tree built inductively according to the rules given %depicted 
  in \fig~\ref{fig:tbl-rules} and the root node of which is the signed labelled formula $\slf{F A:x}$.
  \fillBox
\end{definition}

\begin{definition}\label{def:tbl-closed-tableau}
  A tableau branch~$\tbb$ is \emph{closed} if it ends with a closure rule. 
  A tableau~$\tb$ is \emph{closed} if all of its branches are closed.
  % satisfies at least one of the following conditions:
  % both $\slf{T A:x} \in \tbb$ and $\slf{F A:x} \in \tbb$ for some labelled formula $\lf{A:x}$, or both $\sdf{T x <=L y} \in \tbb$ and $\sdf{F x <=L y} \in \tbb$ for
  % some label constraint $\df{x <=L y}$. A tableau branch is \emph{open} if it is not closed. A tableau is \emph{closed} if all of its branches are closed and it is \emph{open} otherwise.
  \fillBox
\end{definition}

{%\footnotesize
\begin{figure}[tp]
\hrule
\vspace{2mm}

  \newcommand{\tabnl}[1][-1.75ex]{\\[#1]}
  \newcommand{\tabhl}[1][-2ex]{\\[#1]}
  \newcommand{\tabheader}[1]{\hline \tabhl \textbf{#1} \\ \tabhl \hline}
  \centering
  \footnotesize

  \begin{tabular}{|@{\hspace{1ex}}c@{\hspace{1ex}}|}

  \tabheader{Logical rules} \tabnl

  \begin{tproof}
    \tleaf{\tsf{T A:x}}
    \tinfer[tline]1[\ru{Fnot}]{\tsf{F not A:x}}
  \end{tproof}
  \quad
  \begin{tproof}
    \tleaf{\tsf{F A:x}}
    \tinfer[tline]1[\ru{Tnot}]{\tsf{T not A:x}}
  \end{tproof}
  \quad
  \begin{tproof}
    \tleaf{\tsf{T A:x}\tnl\tsf{T B:x}}
    \tinfer[tline]1[\ru{Tand}]{\tsf{T A and B:x}}
  \end{tproof}
  \quad
  \begin{tproof}
    \tleaf{\tsf{F A:x}}
    \tleaf{\tsf{F B:x}}
    \tvrule
    \tinfer[tline]2[\ru{Fand}]{\tsf{F A and B:x}}
  \end{tproof}
  \quad
  \begin{tproof}
    \tleaf{\tsc{T \leqL{x}{u}}\tnl\tsf{F A:u}}
    \tinfer[tline]1[\ru{Fboxm}]{\tsf{F boxm A:x}}
  \end{tproof}
  \quad
  \begin{tproof}
    \tleaf{\tsf{T A:y}}%
    \tinfer[tline]1[\ru{Tboxm}]{\tsf{T boxm A:x}\tnl\tsc{T \leqL{x}{y}}}
  \end{tproof}

  \\ \tabnl

  \begin{tproof}
    \tleaf{%
      \tsc{T \bifL{x}{u}{v}}%
      \tnl%
      \tsf{T A:u}
      \tnl
      \tsf{T B:v}%
    }
    \tinfer[tline]1[\ru{Tstar}]{\tsf{T A star B:x}%
    }
  \end{tproof}
  \quad
  \begin{tproof}
    \tleaf{%
      \tsf{F A:y}%
    }
    \tleaf{
      \tsf{F B:z}%
    }
    \tvrule
    \tinfer[tline]2[\ru{Fstar}]{%
      \tsf{F A star B:x}%
      \tnl%
      \tsc{T \bifL{x}{y}{z}}%
    }
  \end{tproof}
  \quad
  \begin{tproof}
    \tleaf{%
      \tsf{F A:y}%
    }
    \tleaf{
      \tsf{T B:z}%
    }
    \tvrule
    \tinfer[tline]2[\ru{Twand}]{%
      \tsf{T A wand B:x}%
      \tnl%
      \tsc{T \bifL{z}{x}{y}}%
    }
  \end{tproof}
  \quad
  \begin{tproof}
    \tleaf{\tsc{T \bifL{v}{x}{u}}%
           \tnl%
           \tsf{T A:u}\tnl\tsf{F B:v}%
    }
    \tinfer[tline]1[\ru{Fwand}]{\tsf{F A wand B:x}}
  \end{tproof}
  \quad
  \begin{tproof}
    \tleaf{
      \tsf{S A:y}%
    }
    \tinfer[tline]1[\ru{SEqu}]{%
      \tsf{S A:x}%
      \tnl%
      \tsc{T \df{x =L y}}%
    }
  \end{tproof}

  \\ \tabnl \tabheader{Structural rules} \tabnl

  \begin{tproof}
    \tleaf{\tsc{T \leqL{x}{y}}}
    \tleaf{\tsc{T \leqL{y}{x}}}
    \tvrule
    \tleaf{\tsc{T \sepL{x}{y}}}
    \tvrule
    \tinfer[tline]3[\ru{CaseD}]{\tsf{S A:x} \tsp \tsf{S B:y}}
  \end{tproof}
  \quad
  \begin{tproof}
    \tleaf{\tsc{T \leqL{x}{x}}}
    \tinfer[tline]1[\ru{Refl}]{\tsf{S A:x}}
  \end{tproof}
  \quad
  \begin{tproof}
    \tleaf{\tsc{T \leqL{x}{z}}}
    \tinfer[tline]1[\ru{Tran}]{%
      \tsc{T \leqL{x}{y}}%
      \tnl%
      \tsc{T \leqL{y}{z}}%
    }
  \end{tproof}
  \qquad
  \begin{tproof}
    \tleaf{\tsc{T \sepL{xj}{y}}}
    \tinfer[tline]1[\ru{SepP}]{\tsc{T \sepL{x1}{x2}}\tnl\tsc{T \leqL{xi}{y}}}
    \delims{\left[}{\right]}
  \end{tproof}

  \\ \tabnl

  \begin{tproof}
      \tleaf{\tsc{F \leqL{x}{z}}\tsp\tsc{F \leqL{z}{x}}}
      \tinfer[double]1[\ru{Sep}]{\tsc{T \sepL{x}{z}}}
  \end{tproof}
  \qquad
  \begin{tproof}
    \tleaf{%
      \tsc{T \leqL{x}{y}}%
      \tsp%
      \tsc{T \leqL{x}{z}}%
      \tsp%
      \tsc{T \sepL{y}{z}}%
     }
    \tinfer[double]1[\ru{Bif}]{\tsc{T \bifL{x}{y}{z}}}
  \end{tproof}
  \qquad
  \begin{tproof}
    \tleaf{\tsc{T \df{x =L y}}}
    \tinfer[double]1[\ru{Equ}]{%
      \tsc{T \leqL{x}{y}}%
      \tnl%
      \tsc{T \leqL{y}{x}}%
    }
  \end{tproof}
  \hspace{10em}

  \\ \tabnl \tabheader{Closure rules} \tabnl

  \begin{tproof}
    \tleafx
    \tinfer[tline]1[\ru{XForm}]{%
      \tsf{T A:x}%
      \tnl%
      \tsf{F A:x}%
    }
  \end{tproof}
  \qquad
  \begin{tproof}
    \tleafx
    \tinfer[tline]1[\ru{XCons}]{\tsc{T \leqL{x}{y}}\tnl\tsc{F \leqL{x}{y}}}
  \end{tproof}

  \\ \tabnl \tabheader{Derivable rules} \tabnl

  \begin{tproof}
    \tleaf{\tsf{F A:x}\tnl\tsf{F B:x}}
    \tinfer[tline]1[\ru{For}]{\tsf{F A or B:x}}
  \end{tproof}
  \quad
  \begin{tproof}
    \tleaf{\tsf{T A:x}}
    \tleaf{\tsf{T B:x}}
    \tvrule
    \tinfer[tline]2[\ru{Tor}]{\tsf{T A or B:x}}
  \end{tproof}
  \quad
  \begin{tproof}
    \tleaf{\tsf{F A:x}}
    \tleaf{\tsf{T B:x}}
    \tvrule
    \tinfer[tline]2[\ru{Timp}]{\tsf{T A imp B:x}}
  \end{tproof}
  \quad
  \begin{tproof}
    \tleaf{\tsf{T A:x}\tnl\tsf{F B:x}}
    \tinfer[tline]1[\ru{Fimp}]{\tsf{F A imp B:x}}
  \end{tproof}

  \\ \tabnl

  \begin{tproof}
    \tleaf{\tsf{T A:x}\tnl\tsf{T B:x}}
    \tleaf{\tsf{F A:x}\tnl\tsf{F B:x}}
    \tvrule
    \tinfer[tline]2[\ru{Teqv}]{\tsf{T A eqv B:x}}
  \end{tproof}
  \quad
  \begin{tproof}
    \tleaf{\tsf{F A:x}\tnl\tsf{T B:x}}
    \tleaf{\tsf{T A:x}\tnl\tsf{F B:x}}
    \tvrule
    \tinfer[tline]2[\ru{Feqv}]{\tsf{F A eqv B:x}}
  \end{tproof}
  \quad
    \begin{tproof}
    \tleaf{\tsc{T \leqL{x}{u}}\tnl\tsf{T A:u}}
    \tinfer[tline]1[\ru{Tdiam}]{\tsf{T diam A:x}}
  \end{tproof}
  \quad
  \begin{tproof}
    \tleaf{\tsf{F A:y}}%
    \tinfer[tline]1[\ru{Fdiam}]{\tsf{F diam A:x}\tnl\tsc{T \leqL{x}{y}}}
  \end{tproof}
  \quad
  \begin{tproof}
    \tleaf{\tsc{T \sepL{y}{x}}}
    \tinfer[tline]1[\ru{SepC}]{\tsc{T \sepL{x}{y}}}
  \end{tproof}

  \\ \tabnl

  \begin{tproof}
    \tleafx
    \tinfer[tline]1[\ru{Ftrue}]{\tsf{F true:x}}
  \end{tproof}
  \quad
  \begin{tproof}
    \tleafx
    \tinfer[tline]1[\ru{Tfalse}]{\tsf{T false:x}}
  \end{tproof}
  \quad
  \begin{tproof}
    \tleafx
    \tinfer[tline]1[\ru{XSepC}]{\tsc{T \sepL{x1}{x2}}\tnl\tsc{T \leqL{xi}{xj} }}
  \end{tproof}
  \quad
  \begin{tproof}
    \tleafx
    \tinfer[tline]1[\ru{XBifA}]{\tsc{T \bifL{x}{y1}{y2}}\tnl\tsc{F \leqL{x}{yi}}}
  \end{tproof}
  \quad
  \begin{tproof}
    \tleafx
    \tinfer[tline]1[\ru{XBifS}]{\tsc{T \bifL{x}{y1}{y2}}\tnl\tsc{T \leqL{yi}{yj}}}
  \end{tproof}

  \\ \tabnl \tabheader{Side-conditions and comments}

  \parbox{0.9\linewidth}{%
    \begin{itemize}
    \item
    In rules introducing $\df u$ and/or $\df v$, $\df u$ and $\df v$ are distinct labels that are fresh in the branch.
    \item
    In rules involving $i$ and/or $j$, $i \in \{1,2\}$ and $j = 3 - i$.
    \item
    Double lines indicate rules that can be used bottom-up or top-down.
    \item
    The brackets indicate that the rule is optional (included for $\BL$, excluded for lax $\BL$).

    \end{itemize}
  }

  \\ \hline

  \end{tabular}

  \caption{Rules of the $\TBL$ calculus}
  \label{fig:tbl-rules}
\vspace{2mm}
\hrule
\end{figure}}

\begin{definition}\label{def:tbl-proof}
Let $\lf{A:x}$ be a labelled formula.
A \emph{$\TBL$-proof of~$\lf{A:x}$} is a \emph{closed tableau for~$\lf{A:x}$}.
We write $\df{|-TBL \lf{A:x}}$ if $\lf{A:x}$ is \emph{provable~in $\TBL$}, that is if there exists a $\TBL$-proof of~$\lf{A:x}$.
Similarly, a formula~$\df{A}$ is \emph{provable} in~$\TBL$, written $\df{|-TBL A}$, if\/ $\df{|-TBL \lf{A:x}}$ for some label~$\df{x}$.
\fillBox
\end{definition}

{%\small 
\begin{figure}[t]
\hrule
\vspace{2mm}
\centering
\begin{tsmallproof}
  % branch 1
  \tleafx%
  \tinfer(6)[tline]1[\ru{XForm}]{%
    \tsf{F A:y}(6)%
  }
  \tinfer(5)[tline]1[\ru{Fdiam}]{%
    \tsf{F diam A:x}(5)%
  }
  % branch 2
  \tleafx%
  \tinfer(8)[tline]1[\ru{XForm}]{%
    \tsf{F B:z}(8)%
  }
  \tinfer(7)[tline]1[\ru{Fdiam}]{%
    \tsf{F diam B:x}(7)%
  }
  \tvrule(4)
  % split
  \tinfer(4)[tline,separation=1em]2[\ru{Fand}]{%
    \tsc{T \leqL{x}{y}}(5)%
    \tsp
    \tsc{T \leqL{x}{z}}(7)%
    \tsp
    \tsc{T \sepL{y}{z}}%
  }
  \tinfer(3)[tline]1[\ru{Bif}]{%
    \tsc{T \bifL{x}{y}{z}}(3)%
    \tnl%
    \tsf{T A:y}(6)%
    \tsp%
    \tsf{T B:z}(8)%
  }
  \tinfer(2)[tline]1[\ru{Tstar}]{%
    \tsf{T A star B:x}(2)%
    \tnl%
    \tsf{F diam A and diam B:x}(4)%
  }
  \tinfer(1)[tline]1[\ru{Fimp}]{%
    \tsf{F A star B imp diam A and diam B:x}
  }
\end{tsmallproof}
  \hspace{4cm}
  \begin{tsmallproof}
  \tleaf{%
    \tsc{T x <=L y1 , z1}%
    \tnl%
    \tsf{T true star true:y1}%
    \tnl%
    \tsf{T true star true:z1}%
  }
  \tinfer(5)[tline]1[\ru{Tboxm}\text{ twice}]{%
    \tsc{T \bifL{x}{y1}{z1}}(5)%
    \tnl%
    \tsf{T true:y1}%
    \tnl%
    \tsf{T true:z1}%
  }
  \tinfer(4)[tline]1[\ru{Tstar}]{%
    \tsf{T true star true:x}(4)%
  }
  \tinfer(3)[tline]1[\ru{Tboxm}]{%
    \tsc{T x <=L x}%
  }
  \tinfer(2)[tline]1[\ru{Refl}]{%
    \tsf{T boxm ( true star true ):x}(2,3,5)%
  }
  \tinfer(1)[tline]1[\ru{Fnot}]{%
    \tsf{F not boxm ( true star true ):x}
  }
\end{tsmallproof}

\caption{Tableaux examples}
\vspace{2mm}
\hrule
\label{fig:two-tableau-examples}
\end{figure}}

\begin{example}\label{ex:tbl-proof-one}
The tableau depicted on the left-hand side of \fig~\ref{fig:two-tableau-examples} is a $\TBL$-proof of $\df{A star B imp diam A and diam B}$.
Step~3 is only given to improve readability and is not really necessary as it is just the explicit expansion of the shorthand for $\sdf{T \bifL{x}{y}{z}}$.  
\fillBox
\end{example}

\begin{example}\label{ex:tbl-proof-two}
A $\TBL$-proof of $\varphi \equiv \df{( diam A star B ) eqv ( A star B )}$ is given in the tableau below. 

\smallskip

\begin{center}{\small
\begin{tsmallproof}
  % branch 1
  % branch 11
  \tleafx%
  \tinfer(6)[tline]1[\ru{XForm}]{%
    \tsf{F A:y}(6)
  }
  \tinfer(5)[tline]1[\ru{Fdiam}]{%
    \tsc{T \leqL{y}{y}}(5)
  }
  \tinfer(4)[tline]1[\ru{Refl}]{%
    \tsf{F diam A:y}(4,5)
  }
  % branch 12
  \tleafx%
  \tinfer(7)[tline]1[\ru{XForm}]{%
    \tsf{F B:z}(7)
  }
  \tvrule(3)
  % split
  \tinfer(3)[tline]2[\ru{Fstar}]{%
    \tsc{T \bifL{x}{y}{z}}(3)
    \tnl%
    \tsf{T A:y}(6)
    \tnl%
    \tsf{T B:z}(7)
  }
  \tinfer(2)[tline]1[\ru{Tstar}]{%
    \tsf{F diam A star B:x}(3)
    \tnl%
    \tsf{T A star B:x}(2)
  }
  %
  %
  % branch 2
  % branch 21
  \tleafx%
  \tinfer(12)[tline]1[\ru{XSepC}]{%
    \tsc{T \sepL{y}{u}}(12)%
  }
  \tinfer(11)[tline]1[\ru{SepP}]{}
  \delims{\left[\!}{\right]}
  \tinfer[no rule]1{%
    \tsc{T \leqL{z}{u}}(11)%
  }
  % branch 22
  \tleafx%
  \tinfer(6)[tline]1[\ru{XBifS}]{%
    \tsc{T \leqL{y}{z}}(6)%
  }
  \tinfer(5)[tline]1[\ru{Tran}]{%
    \tsc{T \leqL{u}{z}}(5)%
  }
  \tvrule(4)
  % branch 23
  \tleafx%
  \tinfer(9)[tline]1[\ru{XForm}]{%
    \tsf{F A:u}(9)
  }
  \tleafx%
  \tinfer(10)[tline]1[\ru{XForm}]{%
    \tsf{F B:z}(10)
  }
  \tvrule(8)
  \tinfer(8)[tline]2[\ru{Fstar}]{%
    \tsc{T \bifL{x}{u}{z}}(8)%
  }%
  \tinfer(7)[tline]1[\ru{Tran}+\ru{Bif}]{%
    \tsc{T \sepL{u}{z}}(7)%
  }%
  \tvrule(4)
  \tinfer(4)[tline,separation=3pt]3[\ru{CaseD}]{
    \tsc{T \leqL{y}{u}}(5,7,12)
    \tnl
    \tsf{T A:u}(4,9)
  }
  \tinfer(3)[tline]1[\ru{Tdiam}]{%
    \tsc{T \bifL{x}{y}{z}}(6,7,11)
    \tnl
    \tsf{T diam A:y}(3)
    \tnl
    \tsf{T B:z}(4,10)
  }
  \tinfer(2)[tline]1[\ru{Tstar}]{%
    \tsf{T diam A star B:x}(2)
    \tnl%
    \tsf{F A star B:x}(8)
  }
  %%%%%
  \tvrule(1)
  % split
  \tinfer(1)[tline]2[\ru{Feqv}]{%
    \tsf{F ( diam A star B ) eqv ( A star B ):x}
  }
\end{tsmallproof}
}
\end{center}
After Step~1, the tableau splits into two parts, the interesting one being the second one (on the right-hand side of the first vertical rule) that eventually leads to four closed branches.
Step~4 is an example of the case distinction rule. We remark that the first case only leads to a closed branch when the optional rule $\ru{SepP}$ is used (resulting in the optional steps Step~$11$ and Step~$12$).
Indeed, without persistence of separation, one can build a countermodel of~$\varphi$ as illustrated in \fig~\ref{fig:bl-examples}.
Hence, $\varphi$ is a formula that distinguishes persistent from non-persistent ${\BL}$. 
\fillBox
\end{example}

\smallskip % layout of p7 was very dense

\begin{example}\label{ex:tbl-infine}
The tableau depicted on the right-hand side of \fig~\ref{fig:two-tableau-examples} is an example of an infinite open tableau for $\df{not boxm ( true star true )}$.
The signed formula $\slf{T boxm ( true star true ):x}$ introduced in Step~1 must be expanded for all labels~$\df{u}$ such that $\slf{T \leqL{x}{u}}$ occurs in the branch.
The first such label is~$\df{x}$, so that $\slf{T true star true:x}$ is % gets 
introduced in Step~3.
The expansion of $\slf{T true star true:x}$ in Step~4 generates two new successors of~$\df{x}$, namely $\df{y1 , z1}$, and then induces Step~5, where two new expansions of $\slf{T boxm ( true star true ):x}$ are performed (for simplicity, in one step for both $\df{y1}$ and $\df{z1}$).
Step~5 results in the introduction of $\slf{T true star true:y1}$ and $\slf{T true star true:z1}$, the expansion of which %should 
creates four new successors of~$\df{x}$, two %successors
of~$\df{y1}$ and two %successors 
of~$\df{z1}$.
The whole process described previously then repeats itself infinitely often as the ${\BL}$ models  such that $\df{x ||- boxm ( true star true )}$ are the ones in which all sucessors of~$\df{x}$ have at least two distinct successors.
%More information on 
This point is further discussed at the beginning of Section~\ref{sec:FMP}.
\fillBox
\end{example}

\section{Soundness and Completeness} \label{sec:SandC}

\begin{definition}\label{def:tbl-domain}
The \emph{domain of a tableau branch~$\tbb$}, denoted $\dom{\tbb}$, is the set
$\SETC{ \df x }{ (\slf{S A:x}) \in \tbb}$ of all labels occurring in~$\tbb$.
\fillBox
\end{definition}

\begin{definition}[Realization]\label{def:realization}
Let $\tbb$ be a tableau branch.
A \emph{realization} of~$\tbb$ in a $\BL$ model $\df{MO = ( Wlds , <=S , RelS , ValS , ||- )}$ is a function $\realFun$ from $\dom{\tbb}$ to $\df{Wlds}$ such that
\begin{itemize}[leftmargin=6mm,label=--]
\setlength\itemsep{-0.5mm}
\item if $( \slf{T A:x} ) \in \tbb$, then $\df{ \real{x} ||- A }$ and
if $( \slf{F A:x} ) \in \tbb$, then $\df{ \real{x} /||- A }$
\item if $( \sdf{T x <=L y} ) \in \tbb$, then $\df{ \real{x} <=S \real{y} }$,
and if $( \sdf{F x <=L y} ) \in \tbb$, then $\df{ \real{x} /<=S \real{y} }$.
\end{itemize}
A tableau branch is \emph{realizable} if has at least one realization in some $\BL$ model.
A tableau is \emph{realizable} if has at least one realizable branch.
\fillBox
\end{definition}

\begin{lemma}\label{lem:closed-branch-not-realizable}
If a tableau branch is closed, then it is not realizable. %\fillBox 
\end{lemma}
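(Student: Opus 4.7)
The proof is by contradiction and case analysis on the closure rule that ends the branch. Assume the branch $\tbb$ is closed and, for contradiction, that it has a realization $\realFun$ in some $\BL$ model $\df{MO = ( Wlds , <=S , RelS , ValS , ||- )}$. The goal is to derive a contradiction from the defining clauses of a realization in Definition~\ref{def:realization}, together with the semantic clauses of Definition~\ref{def:model} and the frame conditions of Definitions~\ref{def:tree}~and~\ref{def:frame}.

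First I would handle the two primitive closure rules. If $\tbb$ ends with $\ru{XForm}$, then both $\slf{T A:x}$ and $\slf{F A:x}$ occur in $\tbb$; by Definition~\ref{def:realization} we would get $\df{ \real{x} ||- A }$ and $\df{ \real{x} /||- A }$ simultaneously, which is absurd. If $\tbb$ ends with $\ru{XCons}$, then both $\sdf{T x <=L y}$ and $\sdf{F x <=L y}$ occur in $\tbb$, which would force $\df{ \real{x} <=S \real{y} }$ and $\df{ \real{x} /<=S \real{y} }$ at the same time. These two cases constitute the core of the argument.

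Next I would treat the derivable closure rules, using the fact that the shorthands $\sdf{T x ><L y}$ and $\sdf{T \bifL{x}{y}{z}}$ unfold into signed label constraints whose realizations force the corresponding relations on worlds (separation, resp. $\df{RelS}$) to hold via Definition~\ref{def:frame}. Concretely, $\ru{Ftrue}$ and $\ru{Tfalse}$ reduce to $\ru{XForm}$ through the definitions of $\df{true}$ and $\df{false}$; $\ru{XSepC}$ gives $\df{ \real{x1} ><S \real{x2} }$ and $\df{ \real{xi} <=S \real{xj} }$, contradicting the definition of $\df{><S}$; and $\ru{XBifA}$ and $\ru{XBifS}$ each yield $\df{ \bifS{\real{x}}{\real{y1}}{\real{y2}} }$, which by clause $(B)$ of Definition~\ref{def:frame} forces $\df{\real{x} <=S \real{yi}}$ and $\df{\real{y1} ><S \real{y2}}$, directly contradicting the other constraint imposed by $\realFun$.

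The whole argument is essentially routine; the only subtle point is keeping careful track of how the shorthand notations for $\df{=L}$, $\df{><L}$, and $\df{\bifL{x}{y}{z}}$ unfold into primitive signed label constraints, and then how their images under $\realFun$ correspond, through the frame definitions, to the semantic relations $\df{=S}$, $\df{><S}$, and $\df{RelS}$. I expect no genuine obstacle, since each closure rule was designed to encode precisely a local inconsistency that the frame and semantic definitions make impossible to realize.
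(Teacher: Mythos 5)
Your proposal is correct and follows essentially the same argument as the paper: assume a realization exists and derive a contradiction from the two primitive closure rules, $\ru{XForm}$ giving $\df{ \real{x} ||- A }$ and $\df{ \real{x} /||- A }$, and $\ru{XCons}$ giving $\df{ \real{x} <=S \real{y} }$ and $\df{ \real{x} /<=S \real{y} }$. Your additional treatment of the derivable closure rules is harmless but not needed, since those rules unfold (via the shorthands for $\df{><L}$ and bifurcation) into the primitive cases already covered.
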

\begin{proof}
Let $\tbb$ be a closed tableau branch. Assume that~$\tbb$ is realizable. Then, there exists a realization $\realFun$ of~$\tbb$ in some $\BL$ model. 

\begin{itemize}[leftmargin=6mm,label=--]
\item If $\tbb$ is closed because both $( \slf{T A:x} ) \in \tbb$ and $( \slf{F A:x} ) \in \tbb$, then by definition of a realization, we have both $\df{ \real{x} ||- A }$ and $\df{ \real{x} /||- A}$, which is a contradiction.
\item If $\tbb$ is closed because both $( \sdf{T x <=L y} ) \in \tbb$ and $( \sdf{F x <=L y} ) \in \tbb$, then by definition of a realization, we have both $\df{ \real{x} <=S \real{y} }$ and $\df{ \real{x} /<=S \real{y} }$, which is a contradiction.
\end{itemize}
Therefore, $\tbb$ cannot be realizable.
\end{proof}

\begin{lemma}\label{lem:tbl-preserves-realizabibility}
If a tableau~$\tb$ is realizable, then expanding~$\tb$ using one of the tableau expansion rules given in \fig~\ref{fig:tbl-rules} results in a realizable tableau~$\tb'$. %\fillBox
\end{lemma}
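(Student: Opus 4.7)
The plan is to perform case analysis on which expansion rule is applied to produce $\tb'$ from $\tb$. Since $\tb$ is realizable, some branch $\tbb$ of $\tb$ admits a realization $\rho$ into a $\BL$ model $\mathcal{M}$. If the rule extends a branch other than $\tbb$, then $\tbb$ survives intact in $\tb'$ and is still realized by $\rho$, so $\tb'$ is realizable. The work therefore concentrates on the case where the rule acts on $\tbb$: here I must produce at least one realizable branch among those newly introduced, possibly by extending $\rho$ to the fresh labels that the rule creates.

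For non-branching logical rules that introduce fresh labels --- $\ru{Fboxm}$, $\ru{Tstar}$, $\ru{Fwand}$ --- I would extend $\rho$ using the existential clause of the corresponding satisfaction condition. For instance, in $\ru{Tstar}$, since $\rho(x) \Vdash A \ast B$, the semantics yields worlds $w_1, w_2$ with $R(\rho(x), w_1, w_2)$, $w_1 \Vdash A$, and $w_2 \Vdash B$; setting $\rho(u) := w_1$ and $\rho(v) := w_2$ realizes the extended branch, including the new bifurcation constraints. Non-branching rules without fresh labels --- $\ru{Tnot}$, $\ru{Fnot}$, $\ru{Tand}$, $\ru{Tboxm}$, $\ru{SEqu}$ --- are realized by $\rho$ directly from the semantics (using antisymmetry of $\leq$ in the $\ru{SEqu}$ case). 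For branching logical rules --- $\ru{Fand}$, $\ru{Fstar}$, $\ru{Twand}$ --- the appropriate semantic clause supplies a disjunction forcing at least one alternative to remain realized; e.g., for $\ru{Fstar}$, $\rho(x) \not\Vdash A \ast B$ together with $R(\rho(x), \rho(y), \rho(z))$ yields either $\rho(y) \not\Vdash A$ or $\rho(z) \not\Vdash B$.

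The structural rules are discharged from the frame conditions: $\ru{Refl}$ by reflexivity, $\ru{Tran}$ by transitivity, and $\ru{SepP}$ (when included) by the persistent separation property $(P)$. The double-lined rules $\ru{Sep}$, $\ru{Bif}$, and $\ru{Equ}$ preserve realizability in both directions because they are literal restatements of the definitions of separation, of the bifurcation relation $R$ (Definition~\ref{def:frame}), and of label-equality. For $\ru{CaseD}$, the trichotomy built into the definition of separation --- any two worlds $\rho(x), \rho(y)$ satisfy exactly one of $\rho(x) \leq \rho(y)$, $\rho(y) \leq \rho(x)$, or $\rho(x)$ separated from $\rho(y)$ --- guarantees that one of the three conclusions is realized by $\rho$.

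The principal obstacle lies in the rules that introduce fresh labels together with multiple new constraints (most notably $\ru{Tstar}$, $\ru{Fwand}$, and $\ru{Fboxm}$). I must verify that the witnesses extracted from the semantics simultaneously satisfy \emph{all} of the new labelled formulas \emph{and} the accompanying label constraints (e.g.\ the full bifurcation chain $\sdf{T x <=L u}$, $\sdf{T x <=L v}$, $\sdf{T u ><L v}$ in $\ru{Tstar}$), and that freshness of the new labels ensures the extension does not conflict with pre-existing constraints elsewhere in $\tbb$.
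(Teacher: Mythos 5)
Your proposal is correct and takes essentially the same approach as the paper: fix a realizable branch, observe that expansions of other branches are harmless, and then argue rule by rule, extending the realization via the existential semantic clauses for fresh-label rules, using the semantic disjunctions for branching rules, and invoking the frame conditions for the structural rules (the paper writes out only the $\ru{Twand}$ and $\ru{Fwand}$ cases and declares the rest similar). One negligible slip: in $\ru{CaseD}$ the three alternatives are not mutually exclusive when $\rho(x)=\rho(y)$, but the argument only needs that at least one of them is realized, which your reasoning delivers.
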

\begin{proof}
Suppose that~$\tb$ is realizable. Then, it has at least one realizable branch~$\tbb$.
If $\tb'$ is obtained from~$\tb$ by expanding a branch that is distinct from~$\tbb$ then $\tb'$ remains realizable since it still contains the unchanged realizable branch~$\tbb$. Otherwise, $\tb'$ is obtained by expanding~$\tbb$ into~$\tbb'$.
We then proceed by case analysis on the tableau rule expanding~ $\tbb$.
Let $\realFun$ be a realization of~$\tbb$ is some $\BL$ model. 

We consider just a few illustrative cases, the others being similar.
\begin{itemize}[leftmargin=6mm,label=--]
\item Case $\ru{Twand}$:~

If $(\sdf{T \bifL{z}{x}{y}}) \in \tbb$ and $(\sdf{T A wand B:x}) \in \tbb$, then by Definition~\ref{def:realization}, we get $\ter{\real{z}}{\real{x}}{\real{y}}$ and $\df{ \real{x} ||- A wand B}$.
It then follows from Definition~\ref{def:model} that $\df{ \real{y} ||- A }$ and $\df{ \real{z} /||- B}$. Therefore, $\realFun$ realizes~$\tbb'$.

\item Case $\ru{Fwand}$:~

If $(\sdf{F A wand B:x}) \in \tbb$, then by Definition~\ref{def:realization}, we get $\df{ \real{x} /||- A wand B}$. By Definition~\ref{def:model}, there exist $\df{wld1 , wld2 in Wlds}$ such that $\terdf{wld2}{\real{x}}{wld1}$, $\df{wld1 ||- A}$ and $\df{wld2 /||- B}$.
Since~$\tbb'$ extends~$\tbb$ with $(\sdf{T \bifL{v}{x}{u}})$, $(\slf{T A:u})$ and $(\slf{F B:v})$, where $\df{u}$ and $\df{b}$ are distinct fresh labels, we can extend~$\realFun$ into a realization of~$\tbb'$ by setting $\real{u} = \df{wld1}$ and $\real{v} = \df{wld2}$.

\end{itemize}
The other cases are similar. 
\end{proof}

\begin{theorem}[Soundness]\label{thm:tbl-soundness}
If\/ $\df{|-TBL A}$, then $\df{|= A}$. %\fillBox 
\end{theorem}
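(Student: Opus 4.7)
The plan is the standard realizability argument, which is made easy by the two preceding lemmas. First I would argue contrapositively: assume $\df{/|=TBL A}$, so that there exists a $\BL$ model $\df{MO}$ and a world $\df{wld}$ with $\df{MO , wld /||- A}$. My goal will then be to show that no $\TBL$-proof of $\df{A}$ can exist, i.e., every tableau starting from the root $\slf{F A:x}$ has at least one open branch.

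Next I would set up the realizability invariant. The initial tableau $\tb_0$ of a would-be $\TBL$-proof of $\df{A}$ consists of the single root $\slf{F A:x}$, so I can define a realization $\realFun$ of its unique branch into $\df{MO}$ simply by setting $\real{x} = \df{wld}$. By the failure assumption, this trivially satisfies the condition of Definition~\ref{def:realization}, so $\tb_0$ is realizable. Applying Lemma~\ref{lem:tbl-preserves-realizabibility} inductively to the sequence of rule applications that constructs any tableau $\tb$ for $\lf{A:x}$ from $\tb_0$, I conclude that every such $\tb$ is realizable, i.e., has at least one branch carrying a realization into some $\BL$ model.

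Finally I would combine this with Lemma~\ref{lem:closed-branch-not-realizable}. A realizable branch, by definition, is not closed (since closed branches are not realizable). Hence every tableau for $\lf{A:x}$ contains at least one non-closed branch, so no tableau for $\lf{A:x}$ is closed, meaning $\df{/|-TBL \lf{A:x}}$ for every label $\df{x}$. By Definition~\ref{def:tbl-proof}, this yields $\df{/|-TBL A}$, contradicting the assumption of the contrapositive and completing the proof.

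I do not expect any serious obstacle: the work has been packaged into Lemmas~\ref{lem:closed-branch-not-realizable} and~\ref{lem:tbl-preserves-realizabibility}, so the only delicate point is the (trivial) verification that the one-node tableau $\slf{F A:x}$ is realized by sending $\df{x}$ to the refuting world. Care is needed only to phrase the argument in terms of \emph{every} tableau built from $\slf{F A:x}$, not a specific one, since the construction is non-deterministic; but since realizability is preserved by every single rule application regardless of which branch or which rule is chosen, this is immediate by induction on the number of rule applications.
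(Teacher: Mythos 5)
Your proposal is correct and follows essentially the same route as the paper: establish that the one-node root tableau $\slf{F A:x}$ is realizable by mapping $\df{x}$ to a refuting world, propagate realizability through every rule application via Lemma~\ref{lem:tbl-preserves-realizabibility}, and conclude via Lemma~\ref{lem:closed-branch-not-realizable} that no tableau for $\lf{A:x}$ can be closed. The only differences are cosmetic --- you phrase it contrapositively where the paper argues by contradiction, and you spell out the construction of the initial realization, which the paper leaves implicit.
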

\begin{proof}
If $\df{|-TBL A}$ then we have closed tableau $\tb$ for $\lf{A:x}$ for some label~$\df{x}$. Assume that $\df{/|= A}$.
Any tableau construction procedure that results in~$\tb$
begins with the tableau $\tb_0$ that consists in the single node
$\slf{F A:x}$.
Since $\tb_0$ is realizable, Lemma~\ref{lem:tbl-preserves-realizabibility} implies that~$\tb$ should also be realizable and should therefore contain at least one realizable branch~$\tbb$. 
By Lemma~\ref{lem:closed-branch-not-realizable}, $\tbb$ cannot be closed. It then follows that $\tb$ is open, which is a contradiction. Thus, we have $\df{|= \, A}$.
\end{proof}

\newcommand{\fulFun}{\Vvdash}
\newcommand{\fulf}[2][\tbb]{(\dmParse{##1;{\slf{##1}}}{,}{#2;})\in#1}
\newcommand{\fulc}[2][\tbb]{(\dmParse{##1;{\sdf{##1}}}{,}{#2;})\in#1}
\newcommand{\ful}[2][\tbb]{#2\in#1}
\newcommand{\nfulf}[2][\tbb]{(\dmParse{##1;{\slf{##1}}}{,}{#2;})\not\in#1}
\newcommand{\nfulc}[2][\tbb]{(\dmParse{##1;{\sdf{##1}}}{,}{#2;})\not\in#1}
\newcommand{\nful}[2][\tbb]{#2\not\in#1}

\begin{definition}\label{def:tbl-saturated}
A tableau branch~$\tbb$ is \emph{saturated} if it satisfies all of the following conditions:
\begin{enumerate}[leftmargin=6.5mm] 
\setlength\itemsep{-0.5mm}
\item \label{cond:ful-reflexivity}
if $\fulf{S A:x}$, then $\fulc{T x <=L x}$
\item \label{cond:ful-case-distinction}
if $\fulf{S A:x ; S B:y}$, then $\fulc{T x <=L y}$ or $\fulc{T y <=L x}$ or $\fulc{T x ><L y}$
\item \label{cond:ful-transitivity}
if $\fulc{T x <=L y}$ and $\fulc{T y <=L z}$, then $\fulc{T x <=L z}$
\item \label{cond:ful-antisym}
if $\fulc{T x =L y}$ and $\fulf{S A:x}$, then $\fulf{S A:y}$
\item \label{cond:ful-bifurcation}
if $\fulc{T x ><L y}$ and $\fulc{T x <=L z}$, then $\fulc{T z ><L y}$

\item \label{cond:ful-T-not}
if $\fulf{T not A:x}$, then $\fulf{F A:x}$
\item \label{cond:ful-F-not}
if $\fulf{F not A:x}$, then $\fulf{T A:x}$
\item \label{cond:ful-T-and}
if $\fulf{T A and B:x}$, then $\fulf{T A:x}$ and $\fulf{T B:x}$
\item \label{cond:ful-F-and}
if $\fulf{F A and B:x}$, then $\fulf{F A:x}$ or $\fulf{F B:x}$

\item \label{cond:ful-T-boxm}
if $\fulf{T boxm A:x}$, 
then $\df{forall. y} \in \dom{\tbb}$,
     if $\fulc{T x <=L y}$, then $\fulf{T A:y}$
\item \label{cond:ful-F-boxm}
if $\fulf{F boxm A:x}$,  
then for some $\df{y} \in \dom{\tbb}$,
     $\fulc{T x <=L y}$ and $\fulf{F A:y}$

\item \label{cond:ful-T-star}
if $\fulf{T A star B:x}$, 
then, 
% \parbox[t]{0.7\linewidth}{%
%        for some $\df{y , z} \in \dom{\tbb}$, \\
%        $\fulc{T \bifL{x}{y}{z}}$, $\fulf{T A:y}$ and $\fulf{T B:z}$
%      }
for some $\df{y , z} \in \dom{\tbb}$, 
       $\fulc{T \bifL{x}{y}{z}}$, $\fulf{T A:y}$, and $\fulf{T B:z}$
\item \label{cond:ful-F-star}
if $\fulf{F A star B:x}$, 
then, 
% \parbox[t]{0.7\linewidth}{%
%        for all $\df{y , z} \in \dom{\tbb}$,\\
%        if $\fulc{T \bifL{x}{y}{z}}$, then $\fulf{F A:y}$ or $\fulf{F B:z}$
%      }
for all $\df{y , z} \in \dom{\tbb}$, 
       if $\fulc{T \bifL{x}{y}{z}}$, then $\fulf{F A:y}$ or $\fulf{F B:z}$ 

\item \label{cond:ful-T-wand}
if $\fulf{T A wand B:x}$, 
then, 
% \parbox[t]{0.7\linewidth}{%
%        for all $\df{y , z} \in \dom{\tbb}$,\\
%        if $\fulc{T \bifL{z}{x}{y}}$, then $\fulf{F A:y}$ or $\fulf{T B:z}$
%      }
for all $\df{y , z} \in \dom{\tbb}$, 
       if $\fulc{T \bifL{z}{x}{y}}$, then $\fulf{F A:y}$ or $\fulf{T B:z}$

\item \label{cond:ful-F-wand}
if $\fulf{F A wand B:x}$,\! 
then, 
% \parbox[t]{0.7\linewidth}{%
%        for some $\df{y , z} \in \dom{\tbb}$,\\
%        $\fulc{T \bifL{z}{x}{y}}$, $\fulf{T A:y}$ and $\fulf{F B:z}$ 
       for some $\df{y , z} \!\in\! \dom{\tbb}$,\!\! 
       $\fulc{T \bifL{z}{x}{y}}$, $\fulf{T A:y}$,\! and $\fulf{F B:z}$.  $\Box$  
\end{enumerate} %\fillBox
\end{definition}

\dfAdd{MOB}{\df{MO}_{\tbb}}
\dfAdd{<=B}{\df{<=S}_{\tbb}}
\dfAdd{/<=B}{\df{/<=S}_{\tbb}}
\dfAdd{RelB}{\df{RelS}_{\tbb}}
\dfAdd{WldsB}{\df{Wlds}_{\tbb}}
\dfAdd{]]B}{]_{\tbb}}
\dfAdd{qleB}{\prec_{\tbb}}
\dfAdd{/qleB}{\nprec_{\tbb}}
\dfAdd{eqvB}{\sim_{\tbb}}
\dfAdd{domB}{\dom{\tbb}}
\newcommand{\domB}{\dom{\tbb}}

% \daniel{The tableaux system works with quasi orders. To force a partial order in the model existence lemma, I need to introduce equivalence classes of labels. Done this in the following definitions, still have to adapt the proofs in the appendix.
% Do we really need rooted trees ?}

% \timo{Can we not root the model in Corollary 23 at x (throw away everything that isn't $\ge x$) and obtain completeness wrt tree models this way? } 

\begin{lemma}\label{lem:tbl-quasi-order}
Let $\tbb$ be a saturated open tableau branch.
The  binary relation~$\df{qleB}$ over~$\domB$ defined as $\df{x qleB y}$ iff $(\slf{T x <=L y}) \in \tbb$ is a quasi-order over~$\domB$ such that if $\fulf{F x <=L y}$ then $\df{x /qleB y}$.
%\fillBox
\end{lemma}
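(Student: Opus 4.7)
The plan is to separately verify the three properties implicit in the statement: reflexivity of $\df{qleB}$, transitivity of $\df{qleB}$, and the implication from a negative constraint $(\sdf{F x <=L y})\in\tbb$ to $\df{x /qleB y}$. All three follow in a routine way from the saturation conditions of Definition~\ref{def:tbl-saturated} together with the openness hypothesis on $\tbb$; the real content is identifying which saturation condition is responsible for each property.

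For reflexivity, I would first observe that any $\df{x}\in\domB$ arises because some signed labelled formula $(\slf{S A:x})\in\tbb$ is present, by Definition~\ref{def:tbl-domain}. Condition~(\ref{cond:ful-reflexivity}) of saturation then immediately yields $(\sdf{T x <=L x})\in\tbb$, which by the very definition of $\df{qleB}$ is $\df{x qleB x}$.

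Transitivity is a direct unfolding of condition~(\ref{cond:ful-transitivity}) of saturation. Suppose $\df{x qleB y}$ and $\df{y qleB z}$; then $(\sdf{T x <=L y})$ and $(\sdf{T y <=L z})$ both lie in $\tbb$, and the condition delivers $(\sdf{T x <=L z})\in\tbb$, i.e., $\df{x qleB z}$.

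The final clause is the only place where openness plays a role, and it is the step that requires slightly more than a look-up into Definition~\ref{def:tbl-saturated}. If $(\sdf{F x <=L y})\in\tbb$ and simultaneously $\df{x qleB y}$, then both $(\sdf{F x <=L y})$ and $(\sdf{T x <=L y})$ occur in $\tbb$, triggering the closure rule $\ru{XCons}$ and contradicting that $\tbb$ is open. Hence $\df{x /qleB y}$. No step here constitutes a genuine obstacle; the lemma is essentially a sanity check that the saturation conditions package up exactly the properties that make $\df{qleB}$ a quasi-order faithful to the negative constraints appearing on the branch.
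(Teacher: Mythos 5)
Your proposal is correct and follows essentially the same route as the paper's own proof: reflexivity and transitivity are read off directly from Conditions~\ref{cond:ful-reflexivity} and~\ref{cond:ful-transitivity} of Definition~\ref{def:tbl-saturated}, and the last clause uses openness to rule out the simultaneous presence of $(\sdf{T x <=L y})$ and $(\sdf{F x <=L y})$ on the branch. Your version is only marginally more explicit (spelling out why every label in $\domB$ witnesses reflexivity, and naming the closure rule $\ru{XCons}$), which is a harmless elaboration rather than a different argument.
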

\begin{proof}
Transitivity and reflexivity of~$\df{qleB}$ clearly follow from Conditions~\ref{cond:ful-reflexivity} and~\ref{cond:ful-transitivity} of Definition~\ref{def:tbl-saturated}. Now, if $\fulc{F x <=L y}$, then $(\slf{T x <=L y}) \not\in \tbb$ because~$\tbb$ is open. 
Hence, $\df{x /qleB y}$ by definition of~$\df{qleB}$.
\end{proof}

\begin{lemma}\label{lem:tbl-partial-order}
Let $\tbb$ be a saturated open branch.
Let~$\df{eqvB}$ be the equivalence relation over~$\domB$ defined as $\df{x eqvB y}$ iff $\df{x qleB y}$ and $\df{y qleB x}$.
For all $\df{x in domB}$, let $\df{ecx} = \SETC{ \df{y in domB} }{ \df{x eqvB y} }$ denote the equivalence class of~$\df{x}$ under~$\df{eqvB}$ and let $\df{WldsB}$ denote the quotient $\domB / \df{eqvB}$.
Then, the binary relation~$\df{<=B}$ over~$\df{WldsB}$ defined as $\df{ecx <=B ecy}$ iff $\df{x qleB y}$ is a partial order over~$\df{WldsB}$ that satisfies the persistent separation property.
%\fillBox
\end{lemma}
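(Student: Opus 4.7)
The plan is to first establish that $\df{<=B}$ is well-defined on the quotient $\df{WldsB}$, then verify reflexivity, transitivity and antisymmetry, and finally prove the persistent separation property, which will be the only non-routine step.

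I would start with well-definedness. Suppose $\df{ecx} = \df{ecx'}$ and $\df{ecy} = \df{ecy'}$ with $\df{x qleB y}$. By definition of $\df{eqvB}$, we have $\df{x' qleB x}$ and $\df{y qleB y'}$, so two applications of transitivity of $\df{qleB}$ (Lemma~\ref{lem:tbl-quasi-order}) give $\df{x' qleB y'}$. Reflexivity of $\df{<=B}$ is immediate from reflexivity of $\df{qleB}$, transitivity is immediate from transitivity of $\df{qleB}$, and antisymmetry is exactly the construction of the quotient: if $\df{ecx <=B ecy}$ and $\df{ecy <=B ecx}$, then $\df{x qleB y}$ and $\df{y qleB x}$, so $\df{x eqvB y}$ and $\df{ecx = ecy}$.

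The substantive part is the persistent separation property. Suppose $\df{ecx1}$ and $\df{ecx2}$ are separated in $\df{( WldsB , <=B )}$ and $\df{ecx1 <=B ecx1'}$. Separation means neither $\df{x1 qleB x2}$ nor $\df{x2 qleB x1}$, i.e., by Lemma~\ref{lem:tbl-quasi-order}, neither $(\slf{T x1 <=L x2})$ nor $(\slf{T x2 <=L x1})$ is in $\tbb$. Since $\df{x1 , x2 in domB}$, each appears in some signed formula of $\tbb$, so Condition~\ref{cond:ful-case-distinction} of Definition~\ref{def:tbl-saturated} (case distinction) forces $(\sdf{T x1 ><L x2}) \in \tbb$, that is, both $(\slf{F x1 <=L x2})$ and $(\slf{F x2 <=L x1})$ are in~$\tbb$. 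Now, since $\df{x1 qleB x1'}$, we have $(\slf{T x1 <=L x1'}) \in \tbb$, and Condition~\ref{cond:ful-bifurcation} of Definition~\ref{def:tbl-saturated} applied to $(\sdf{T x1 ><L x2})$ and $(\slf{T x1 <=L x1'})$ yields $(\sdf{T x1' ><L x2}) \in \tbb$, i.e., both $(\slf{F x1' <=L x2})$ and $(\slf{F x2 <=L x1'})$ lie in $\tbb$. By Lemma~\ref{lem:tbl-quasi-order}, this gives $\df{x1' /qleB x2}$ and $\df{x2 /qleB x1'}$, so $\df{ecx1'}$ and $\df{ecx2}$ are separated in $\df{WldsB}$, as required.

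The main obstacle is ensuring that the purely negative information ``$x_1$ and $x_2$ are incomparable in the quotient'' actually corresponds to a \emph{positive} occurrence of the separation constraint $\sdf{T x1 ><L x2}$ in the branch, which is what the persistence-of-separation condition needs in order to fire; this is exactly where saturation under case distinction (Condition~\ref{cond:ful-case-distinction}) is essential, and where the openness of $\tbb$ (used through Lemma~\ref{lem:tbl-quasi-order}) guarantees that the two alternatives $(\slf{T x1 <=L x2})$ and $(\slf{T x2 <=L x1})$ are genuinely absent from $\tbb$ rather than hidden among its signed formulae.
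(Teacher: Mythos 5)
Your proof is correct and takes essentially the same approach as the paper's: well-definedness and the partial-order axioms come from the quasi-order structure established in Lemma~\ref{lem:tbl-quasi-order}, and persistent separation is obtained from saturation Conditions~\ref{cond:ful-case-distinction} and~\ref{cond:ful-bifurcation} together with the openness of the branch. The only (cosmetic) difference is that you derive the positive separation constraint for the upper world directly in the branch and read off both incomparabilities from it, whereas the paper argues the two directions separately, disposing of one by transitivity alone and the other via the saturation conditions.
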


\begin{proof}
% Proof that <=B is well-defined
First, we show that $\df{<=B}$ is well defined by showing that the following conditions are equivalent for all $\df{ecu , ecv in WldsB}$:
\begin{itemize}
\item[$(a)$] $\df{u qleB v}$
\item[$(b)$] $\df{x qleB y}$ for all $\df{x in ecu}$ and all $\df{y in ecv}$
\item[$(c)$] $\df{x qleB y}$ for some $\df{x in ecu}$ and some $\df{y in ecv}$
\end{itemize}
It is clear that~$(a)$ implies~$(c)$ and that~$(b)$ implies~$(c)$ (and $(a)$).
Therefore, we only have to show that~$(c)$ implies~$(b)$.
Assume $\df{x qleB y}$ for some $\df{x in ecu}$ and some $\df{y in ecv}$.
Then, $\df{x eqvB u}$ implies $\df{u qleB x}$ and $\df{y eqvB v}$ implies $\df{y qleB v}$.
Pick an arbitrary $\df{x' in ecu}$, then $\df{x' eqvB u}$ implies $\df{x' qleB u}$.
Since $\df{u qleB x}$, we get $\df{x' qleB x}$.
Pick an arbitrary $\df{y' in ecv}$, then $\df{y' eqvB v}$ implies $\df{v qleB y'}$
Since $\df{y qleB v}$, we get $\df{y qleB y'}$.
Finally, since we assumed $\df{x qleB y}$, 
$\df{x' qleB x}$ and $\df{y qleB y'}$ imply $\df{x' qleB y'}$.

% Proof of anti-symmetry
Second, we show that $\df{<=B}$ is a partial order over~$\df{WldsB}$.
Since $\df{qleB}$ is a quasi-order over~$\domB$ by Lemma~\ref{lem:tbl-quasi-order}, it immediately follows that~$\df{<=B}$ is both reflexive and transitive.
It remains to show that~$\df{<=B}$ is anti-symmetric.
Assume $\df{ecx <=B ecy}$ and $\df{ecy <=B ecx}$, then, by definition of~$\df{<=B}$, we have $\df{x qleB y}$ and $\df{y qleB x}$, from which we get
$\df{x eqvB y}$ by definition of~$\df{eqvB}$.
Hence, $\df{ecx = ecy}$.

\dfAdd{><B}{\df{><S}}

% Proof of separation persistence
Last, we show that~$\df{<=B}$ satisfies the separation persistence property stated in Definition~\ref{def:tree}.
We pick arbitrary $\df{ecx , ecy , ecz in WldsB}$ such that $\df{ecx ><S ecy}$ and $\df{ecx <=B ecz}$ and show
that $\df{ecz ><S ecy}$.
We have the following facts:
\begin{itemize}
\item[$(i)$] By definition of~$\df{><S}$, $\df{ecx ><S ecy}$ implies $\df{ecx /<=B ecy}$ and $\df{ecy /<=B ecx}$.
\item[$(ii)$] By definition of $\df{<=B}$, $\df{ecx <=B ecz}$ implies $\df{x qleB z}$.
\end{itemize}
Assume that $\df{ecz <=B ecy}$.
Then, $\df{z qleB y}$ by definition of~$\df{<=B}$. 
Since $\df{x qleB z}$ and $\df{z qleB y}$ imply $\df{x qleB y}$, we get $\df{ecx <=B ecy}$ by definition of~$\df{<=B}$, which contradicts $\df{ecx ><S ecy}$ $(i)$.
Hence, $\df{ecz /<=B ecy}$.~\\
Assume that $\df{ecy <=B ecz}$.
Then, $\df{y qleB z}$ by definition of~$\df{<=B}$, from which we get $\fulc{T y <=L z}$ by definition of~$\df{qleB}$.
Besides, $\df{ecx /<=B ecy}$ and $\df{ecy /<=B ecx}$ imply $\df{x /qleB y}$ and $\df{y /qleB x}$ by definition of~$\df{<=B}$.
By definition of~$\df{qleB}$, we get $\nfulc{T x <=L y}$ and $\nfulc{T y <=L x}$.
Since~$\tbb$ is saturated, $\fulc{T x ><L y}$ then follows from Condition~\ref{cond:ful-case-distinction} of Definition~\ref{def:tbl-saturated}. 
In turn, $\fulc{T x ><L y}$ and $\fulc{T y <=L z}$ imply $\fulc{T x ><L z}$ by Condition~\ref{cond:ful-bifurcation} of Definition~\ref{def:tbl-saturated}.
$\fulc{T x ><L z}$ implies $\fulc{F x <=L z}$ by definition of~$\df{><L}$,
but from $\fulc{F x <=L z}$, Lemma~\ref{lem:tbl-quasi-order} implies $\df{x /qleB z}$, which contradicts $\df{x qleB z}$ $(ii)$.
Hence, $\df{ecy /<=B ecz}$.
From, $\df{ecz /<=B ecy}$ and $\df{ecy /<=B ecz}$, we conclude $\df{ecy ><S ecz}$ by definition of~$\df{><S}$.
%fillBox
\end{proof}

\begin{lemma}[Model existence]\label{lem:tbl-model-existence}
Let $\tbb$ be a saturated open tableau branch.
Then, $\tbb$ induces a $\BL$ model  $\df{ MOB = ( Wlds , <=S , RelS , ValS , ||- ) }$,  where $\df{Wlds = WldsB}$ and $\df{<=S \, = \, <=B}$ as per Lemma~\ref{lem:tbl-partial-order}, $\df{RelS}$ is the ternary relation induced by~$\df{<=B}$ as per Definition~\ref{def:frame}, and, for all worlds $\df{ecx in WldsB}$, $\ValS{ecx} = \SETC{ \df{p} }{ ( \slf{T p:x} ) \in \tbb }$.
Moreover, $\df{MOB}$ is such that if $\fulf{T A:x}$ then $\df{ecx ||- A}$ and if $\fulf{F A:x}$ then $\df{ecx /||- A}$. 
%\fillBox
\end{lemma}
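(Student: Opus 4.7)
The plan is to first verify that $\df{MOB}$ is a well-formed $\BL$ model and then to establish the satisfaction statement by induction on the structure of~$\df{A}$. The order-theoretic content is already secured by Lemma~\ref{lem:tbl-partial-order}: $(\df{WldsB},\df{<=B})$ is a partial order with the persistent separation property, and $\df{RelS}$ is derived from $\df{<=B}$ as in Definition~\ref{def:frame}; rootedness can be arranged by adjoining a fresh bottom element to the quotient if needed. The valuation $\ValS{ecx}$ is well defined on equivalence classes because Condition~\ref{cond:ful-antisym} of Definition~\ref{def:tbl-saturated} propagates every signed labelled formula along~$\df{=L}$, so the choice of representative inside $\df{ecx}$ is immaterial.

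For the base case $\df{A = p}$, if $\fulf{T p:x}$ then $\df{p in \ValS{ecx}}$ by definition of the valuation, while if $\fulf{F p:x}$, openness of~$\tbb$ together with Condition~\ref{cond:ful-antisym} prevents any $\df{y eqvB x}$ from witnessing $\fulf{T p:y}$, so $\df{ecx /||- p}$. The Boolean cases for~$\df{not}$ and~$\df{and}$ reduce directly to the induction hypothesis via Conditions~\ref{cond:ful-T-not}--\ref{cond:ful-F-and}. For~$\df{boxm}$, the model-level $\df{ecx <=B ecy}$ unfolds via Lemma~\ref{lem:tbl-quasi-order} to $\fulc{T x <=L y}$, so Conditions~\ref{cond:ful-T-boxm} and~\ref{cond:ful-F-boxm} produce exactly the labelled formulae needed to invoke the induction hypothesis.

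The positive multiplicative cases $\fulf{T A star B:x}$ and $\fulf{F A wand B:x}$ follow from Conditions~\ref{cond:ful-T-star} and~\ref{cond:ful-F-wand}: each produces witness labels for which the shorthand $\fulc{T \bifL{x}{y}{z}}$ (respectively $\fulc{T \bifL{z}{x}{y}}$) unpacks to exactly the three tableau constraints that witness the corresponding $\df{RelS}$-triple at the model level, after which the induction hypothesis finishes the argument.

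The main obstacle lies in the negative multiplicative cases $\fulf{F A star B:x}$ and $\fulf{T A wand B:x}$, where we must handle \emph{every} model-level bifurcation triple involving $\df{ecx}$ rather than only those already asserted in~$\tbb$. For $\fulf{F A star B:x}$, take arbitrary $\df{ecy , ecz in WldsB}$ with $\terdf{ecx}{ecy}{ecz}$ and choose representatives $\df{y , z in domB}$. The two ordering components of~$\df{RelS}$ translate to $\fulc{T x <=L y}$ and $\fulc{T x <=L z}$ by Lemma~\ref{lem:tbl-quasi-order}, while $\df{ecy ><S ecz}$ unfolds to $\df{y /qleB z}$ and $\df{z /qleB y}$, i.e.\ $\nfulc{T y <=L z}$ and $\nfulc{T z <=L y}$. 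Since $\df{y}$ and $\df{z}$ both occur in signed labelled formulae of~$\tbb$, Condition~\ref{cond:ful-case-distinction} forces $\fulc{T y ><L z}$, yielding $\fulc{T \bifL{x}{y}{z}}$; Condition~\ref{cond:ful-F-star} combined with the induction hypothesis then delivers $\df{ecy /||- A}$ or $\df{ecz /||- B}$. The case $\fulf{T A wand B:x}$ is entirely symmetric, using Condition~\ref{cond:ful-T-wand} and the dual bifurcation pattern $\fulc{T \bifL{z}{x}{y}}$, with the same application of Condition~\ref{cond:ful-case-distinction} to recover the separation constraint that is not itself directly recorded in~$\tbb$.
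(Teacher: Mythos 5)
Your proof follows essentially the same route as the paper's: a mutual induction on the structure of $\df{A}$, whose crux --- which the paper also isolates in its $\ru{Twand}$ case --- is that in the negative multiplicative cases ($\fulf{F A star B:x}$ and $\fulf{T A wand B:x}$) the separation component of an arbitrary model-level triple is not directly recorded in the branch and must be recovered from Condition~\ref{cond:ful-case-distinction} of Definition~\ref{def:tbl-saturated}, using the fact that every label in $\domB$ occurs in some signed labelled formula. Your explicit handling of the well-definedness of the valuation and of the $\df{F A star B:x}$ case fills in details the paper dismisses as ``similar''. One caution: your parenthetical claim that rootedness ``can be arranged by adjoining a fresh bottom element'' is not innocuous. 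A new minimum $r$ creates new triples $\ter{r}{w}{w'}$ for every separated pair $w,w'$, and these feed into the universally quantified clause for $\mimp$; a label $\df{x}$ with $\fulf{T A wand B:x}$ could then fail to satisfy $\df{A wand B}$ at $\df{ecx}$ unless $r$ itself can be shown to satisfy $\df{B}$, which no saturation condition provides. The paper is silent on rootedness, so this does not put you behind its argument, but the patch as stated would need further justification (it is harmless only for the $\mimp$-free fragment, where all connectives look upward).
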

\begin{proof}
By mutual induction on the structure of~$\df{A}$ for all labels~$\df{x}$.
\begin{description}

\item[Base case $\df{A = p}$:]~

If $\fulf{T p:x}$, then $\df{p} \in \ValS{ecx}$ by definition of $\df{ValS}$, which implies $\df{ecx ||- p}$ by Definition~\ref{def:model}.

\smallskip

If $\fulf{F p:x}$, then since~$\tbb$ is open, we have $(\slf{T p:x}) \not\in \tbb$. 
Hence, we get $\df{p /in \ValS{ ecx }}$ by definition of $\df{ValS}$, which implies $\df{ecx /||- p}$ by Definition~\ref{def:model}.

\item[Case $\df{A = A1 wand A2}$:]~

If $\fulf{F A1 wand A2:x}$, then for some $\df{y , z} \in \dom{\tbb}$, we have $\fulc{T \bifL{z}{x}{y}}$, $\fulf{T A1:y}$ and $\fulf{F A2:z}$.
$\fulc{T \bifL{z}{x}{y}}$ implies $\terdf{ecz}{ecx}{ecy}$ by Lemma~\ref{lem:tbl-partial-order}.
Moreover, by induction hypothesis, $\fulf{T A1:y}$ and $\fulf{F A2:z}$ imply
$\df{ecy ||- A1}$ and $\df{ecz /||- A2}$.
Hence, by Definition~\ref{def:model}, we have $\df{ecx /||- A1 wand A2}$.

\smallskip

If $\fulf{T A1 wand A2:x}$, then for all $\df{y , z} \in \dom{\tbb}$, if $\fulc{T \bifL{z}{x}{y}}$ then $\fulf{F A1:y}$ or $\fulf{T A2:z}$.
Pick arbitrary $\df{ecu , ecv in WldsB}$ such that $\terdf{ecv}{ecx}{ecu}$ and $\df{ecu ||- A1}$.
Since $\df{ecv <=B ecx}$ and $\df{ecv <=B ecu}$, we have $\fulc{T v <=L x}$ and $\fulc{T v <=L u}$ by definition of~$\df{<=B}$.
Similarly, since $\df{ecx /<=B ecu}$ and $\df{ecu /<=B ecx}$, we have $\nfulc{T x <=L u}$ and $\nfulc{T u <=L x}$, which by Condition~\ref{cond:ful-case-distinction} implies $\fulc{T x ><L u}$.
It then follows that we have $\fulc{T \bifL{v}{x}{u}}$, which implies
$\fulf{F A1:u}$ or $\fulf{T A2:v}$.
By the induction hypothesis, we then get $\df{ecu /||- A1}$ or $\df{ecv ||- A2}$.
Since we assume $\df{ecu ||- A1}$, we necessarily have $\df{ecv ||- A2}$.
Therefore, $\df{ecx ||- A1 wand A2}$.  
\end{description}
The other cases are similar. %\fillBox
\end{proof}

\begin{corollary}\label{cor:counter-model}
If~$\tbb$ is a saturated open tableau branch in a tableau
for $\lf{A:x}$, then the induced model $\df{ MOB = ( WldsB , <=B , RelS , ValS , ||- ) }$ is such that $\df{ecx /||- A}$. %\fillBox
\end{corollary}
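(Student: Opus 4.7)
The plan is to observe that this corollary is an essentially immediate consequence of the model existence lemma (Lemma~\ref{lem:tbl-model-existence}) combined with the definition of a tableau for $\lf{A:x}$. The substantive content has already been packaged into Lemma~\ref{lem:tbl-model-existence}; the corollary merely specializes its conclusion to the root of the tableau.

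First, I would recall Definition~\ref{def:tbl-tableau-for-A}: any tableau for $\lf{A:x}$ is constructed starting from the root node $\slf{F A:x}$. Since the root lies on every branch, it lies in particular on the saturated open branch $\tbb$, so $\fulf{F A:x}$. Moreover, the label $\df{x}$ belongs to $\domB$, so the equivalence class $\df{ecx}$ is a well-defined element of $\df{WldsB}$ in the model $\df{MOB}$ produced by Lemma~\ref{lem:tbl-model-existence}.

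Second, since $\tbb$ is by hypothesis saturated and open, Lemma~\ref{lem:tbl-model-existence} applies and yields a $\BL$ model $\df{MOB}$ satisfying the soundness-of-signs property: if $\fulf{F B:y}$ for any signed labelled formula on $\tbb$, then $\df{ecy /||- B}$. Instantiating this with $\df{B = A}$ and $\df{y = x}$, and using the observation of the previous paragraph, we conclude $\df{ecx /||- A}$, which is the desired statement.

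There is no real obstacle here: the genuinely delicate work lay in Lemma~\ref{lem:tbl-partial-order} (checking that $\df{<=B}$ is well defined on equivalence classes and has the persistent separation property) and in the mutual induction of Lemma~\ref{lem:tbl-model-existence} (which propagates the saturation conditions through the logical and, crucially, the multiplicative connectives by invoking the case-distinction and bifurcation conditions of Definition~\ref{def:tbl-saturated}). The corollary simply repackages the output of that machinery in the form in which it is used in the completeness proof, namely as the extraction of a countermodel to $\df{A}$ from any saturated open tableau branch for $\lf{A:x}$.
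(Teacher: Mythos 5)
Your proof is correct and follows essentially the same route as the paper's: observe that the root node $\slf{F A:x}$ lies on the saturated open branch $\tbb$, and then apply the concluding property of Lemma~\ref{lem:tbl-model-existence} to conclude $\df{ecx /||- A}$. No further comment is needed.
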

\begin{proof}
Since $( \slf{F A:x} )$ is the root of any tableau for $\lf{A:x}$, The concluding property  
%\ref{cond:me-forcing} 
of Lemma~\ref{lem:tbl-model-existence} implies that $\df{x /||- A}$.    
\end{proof}

\begin{theorem}[Completeness]\label{thm:tbl-completeness}
If\/ $\df{||- A}$, then $\df{|-TBL A}$. %\fillBox
\end{theorem}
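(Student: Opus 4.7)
The plan is to prove the contrapositive: if $\df{/|-TBL A}$, then $\df{/|= A}$. The argument proceeds by constructing a suitable tableau for $\lf{A:x}$ (for a fixed label $\df{x}$) via a systematic, fair strategy and then extracting a saturated open branch, from which Corollary~\ref{cor:counter-model} immediately yields a countermodel.

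First, I would define a fair expansion procedure that builds a (possibly infinite) tableau $\tb^{\infty}$ starting from the root $\slf{F A:x}$. Fairness is the standard requirement that every rule applicable to a signed formula or constraint appearing on a branch is eventually applied on that branch, after all labels required by the rule's premises have entered the branch's domain. Concretely, the strategy enumerates the pairs (node, rule instance) and schedules them in stages, interleaving: $\alpha$-style non-branching rules ($\ru{Tnot}$, $\ru{Fnot}$, $\ru{Tand}$, $\ru{Tboxm}$ applied to every $\df{y}$ with $\sdf{T x <=L y}$ already present, $\ru{Twand}$ applied to every existing bifurcation triple, etc.); $\beta$-style branching rules ($\ru{Fand}$, $\ru{Fstar}$, $\ru{CaseD}$ over every unordered pair of labels); and $\delta$-style rules that introduce fresh labels ($\ru{Fboxm}$, $\ru{Tstar}$, $\ru{Fwand}$); together with the structural rules $\ru{Refl}$, $\ru{Tran}$, $\ru{Bif}$, $\ru{Equ}$, $\ru{SEqu}$, $\ru{Sep}$, and (for $\BL$) $\ru{SepP}$. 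One must be a bit careful with bidirectional (double-line) rules such as $\ru{Sep}$ and $\ru{Bif}$: they are used in whichever direction produces information not yet on the branch, so that applications terminate locally.

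Next, since $\tb^{\infty}$ is a finitely branching rooted tree and $\df{/|-TBL A}$ by hypothesis, $\tb^{\infty}$ is not closed. By König's lemma, $\tb^{\infty}$ has an infinite open branch $\tbb$ (or a finite open branch if the construction halts). I would then verify, essentially by unpacking fairness case by case against the fifteen conditions of Definition~\ref{def:tbl-saturated}, that $\tbb$ is saturated: each condition corresponds to the eventual application on $\tbb$ of a specific rule (or combination of rules) of $\TBL$, and openness of $\tbb$ guarantees that the required signed formulas/constraints indeed appear on $\tbb$ rather than being silently eliminated by closure. For example, Condition~\ref{cond:ful-bifurcation} follows from scheduling $\ru{SepP}$ on every pair where $\sdf{T x ><L y}$ and $\sdf{T x <=L z}$ both occur, and Condition~\ref{cond:ful-case-distinction} from scheduling $\ru{CaseD}$ on every pair of labels in $\dom{\tbb}$.

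Finally, I apply Lemma~\ref{lem:tbl-model-existence} to the saturated open branch $\tbb$ to obtain the $\BL$ model $\df{MOB}$, and Corollary~\ref{cor:counter-model} gives $\df{ecx /||- A}$ in $\df{MOB}$, whence $\df{/|= A}$, contradicting the assumption and completing the contrapositive. The main obstacle is bookkeeping in the fair strategy: one must check that scheduling is robust to the continual introduction of fresh labels by $\ru{Fboxm}$, $\ru{Tstar}$, and $\ru{Fwand}$, so that rules which range over all labels in $\dom{\tbb}$ (notably $\ru{Tboxm}$, $\ru{CaseD}$, the universal halves of $\ru{Tstar}$/$\ru{Fstar}$/$\ru{Twand}$, and $\ru{Tran}$) are reapplied against each newly generated label. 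A standard diagonal enumeration of (formula-instance, label-tuple) pairs handles this, and no single rule alone creates closure-blocking obligations, so the induced model is well-defined and refutes $\df A$.
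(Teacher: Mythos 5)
Your proposal is correct and follows essentially the same route as the paper: the paper's proof is a two-sentence sketch invoking a fair tableau construction procedure that either closes (yielding a $\TBL$-proof) or produces a saturated open branch, to which Corollary~\ref{cor:counter-model} is applied to obtain a countermodel. You simply fill in the standard details (the scheduling of rule applications, K\H{o}nig's lemma, and the case-by-case verification that fairness yields saturation) that the paper leaves implicit.
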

\begin{proof}
It is standard to define a tableau construction procedure that applies the rules given in \fig~\ref{fig:tbl-rules} with a fair strategy.
Such a procedure will either result in a finite closed tableau for~$\df A$, in which case we get a $\TBL$-proof of~$\df A$, or build at least one (possibly infinite) complete open branch~$\tbb$, in which case $\tbb$ gives rise to a $\BL$ model $\df{MOB}$ such that $\df{ MOB /|= A}$ by Corollary~\ref{cor:counter-model}.
\end{proof}

\section{A Finite Model Property and Decidability} \label{sec:FMP}

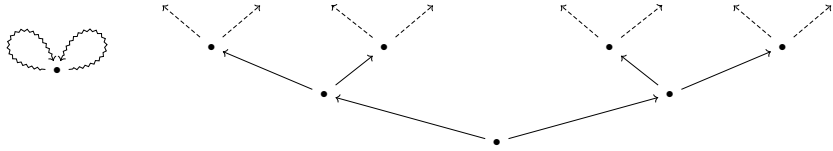
\begin{figure}
\hrule 
\vspace{2mm}
    \centering
    {\tiny  
\begin{tikzcd}
	\bullet
	\arrow[squiggly, from=1-1, to=1-1, loop, in=110, out=175, distance=10mm]
	\arrow[squiggly, from=1-1, to=1-1, loop, in=70, out=5, distance=10mm]
\end{tikzcd}
 \begin{tikzcd}[ampersand replacement=\&]
	{} \&\& {} \& {} \&\& {} \&\& {} \&\& {} \& {} \&\& {} \\
	\& \bullet \&\&\& \bullet \&\&\&\& \bullet \&\&\& \bullet \\
	\&\&\& \bullet \&\&\&\&\&\& \bullet \\
	\&\&\&\&\&\& \bullet
	\arrow[dashed, from=2-2, to=1-1]
	\arrow[dashed, from=2-2, to=1-3]
	\arrow[dashed, from=2-5, to=1-4]
	\arrow[dashed, from=2-5, to=1-6]
	\arrow[dashed, from=2-9, to=1-8]
	\arrow[dashed, from=2-9, to=1-10]
	\arrow[dashed, from=2-12, to=1-11]
	\arrow[dashed, from=2-12, to=1-13]
	\arrow[from=3-4, to=2-2]
	\arrow[from=3-4, to=2-5]
	\arrow[from=3-10, to=2-9]
	\arrow[from=3-10, to=2-12]
	\arrow[from=4-7, to=3-4]
	\arrow[from=4-7, to=3-10]
\end{tikzcd}}

    \caption{A finite model with backlinks (left) and its infinite unfolding (right)} 
    \vspace{1mm}
    \hrule
    \label{fig:backlinks}
\end{figure}

The formula $\df{boxm ( true star true )}$ enforces that every world has at least two distinct successors and can, therefore, only be satisfied in infinite models like the one shown in Figure~\ref{fig:backlinks} (right). Consequently, the finite model property in its traditional formulation fails for $BL$.

However, models such as the one in Figure~\ref{fig:backlinks} (right) are sufficiently regular to allow for a finite schematic representation. Consider, for example, the scheme in Figure~\ref{fig:backlinks} (left). Here, each squiggly edge represents a \emph{backlink}, indicating that a copy of the graph originating from its endpoint should be attached to its starting point via a single edge. By unfolding such a schematic model, we obtain precisely the full binary tree shown in Figure~\ref{fig:backlinks} (right).

 For simplicity, we only present an informal description of \emph{models with backlinks}. These models are trees augmented with additional backlinks --- edges that point backward in the tree order. Formulas in BL can be interpreted on such models by first unfolding them into (infinite) trees. One can then establish the following result:

%\subsection{Modified Finite Model Property}
%
%A \emph{model with back links} is a structure
%\[
%$(W,F,B,V)$ 
%\]
%in which, for some label sets, $L_F,L_B$, we have %$F:L_F\mapsto W^2$ and $B:L_B\mapsto W^2$ (i.e., $F,B$ are labelled \emph{forward} and \emph{backward} edges) such that
%\begin{itemize}[leftmargin=6mm,label=--]
%\setlength\itemsep{-0.5mm}
%    \item $(W,\le,V)$ is a model as before (henceforth the \emph{standard model}), where $\le$ is the image of $L_F$
%    \item If $B(l)=(v,w)$ then $w\le v$ 
%\end{itemize}
%From this, we can define a \emph{tree unravelling} $(W^*,\le^*,V^*)$ where
%\begin{itemize}[leftmargin=6mm,label=--]
%\setlength\itemsep{-0.5mm}
%    \item $W^*$ contains all finite sequences $\langle l_1,\ldots,l_n\rangle$ over $L_F\cup L_R$ such that there is a corresponding sequence $\langle w_0,\ldots,w_n\rangle$ over $W$ satisfying for all $i$: $w_0$ is the root of $(W,\le)$ and either $F(l_i)=(w_{i-1},w_i)$ or $B(l_i)=(w_{i-1},w_i)$
 %   \item $\le^*$ is the ordering of sequences by inclusion
 %   \item $V^*(\langle l_1,\ldots,l_n\rangle)=V(w_n)$, and $V^*(\langle\rangle)=V(w_0)$.
%\end{itemize}

%\begin{lemma}\label{lem:width}
%   If $\varphi$ has a standard model, then it has a standard model that is finitely branching.
%\end{lemma}
%\begin{proof}
%Standard. \fillBox 
%\end{proof}

\begin{lemma} \label{lem:backlinks}
    If $\varphi$ is $\df{wand}$-free and satisfiable in $\BL$, then it has a finite model with backlinks. %\fillBox
\end{lemma}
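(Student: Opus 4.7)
The plan is to start from an arbitrary $\BL$-model $\mathcal{M}$ with $\mathcal{M}, w_0 \Vdash \varphi$, discard everything below $w_0$ (so $\mathcal{M}$ is a tree rooted at $w_0$), and extract from it a finite backlink-model whose unfolding still satisfies $\varphi$ at the root. The construction is a type-based, witness-picking procedure in the spirit of filtration. For each world $w$ of $\mathcal{M}$, define the \emph{type} $t(w) = \{\psi \in \mathrm{sub}(\varphi) : \mathcal{M}, w \Vdash \psi\}$; since $\mathrm{sub}(\varphi)$ is finite, only finitely many types arise.

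The finite structure $\mathcal{M}'$ is built by a breadth-first procedure starting from a root representing $w_0$. At a node $n$ representing $w$, I realize the obligations dictated by $t(w)$: for each $\Box\psi \in \mathrm{sub}(\varphi)$ with $\Box\psi \notin t(w)$, pick a witness $u \geq w$ with $\psi \notin t(u)$; and for each $\phi_1 \ast \phi_2 \in t(w)$, pick bifurcated witnesses $u_1, u_2$ above $w$ with $\phi_i \in t(u_i)$. For each such witness $u$: if some strict ancestor $n'$ of $n$ in $\mathcal{M}'$ already represents a world with $t(n') = t(u)$, install a backlink from $n$ to $n'$; otherwise create a fresh child representing $u$ and continue the procedure at that child. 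Finiteness follows because on any branch of $\mathcal{M}'$ no two nodes share a type (else a backlink would have halted the expansion), bounding the depth by $2^{|\mathrm{sub}(\varphi)|}$, while each node's branching is bounded by the number of $\Box$- and $\ast$-subformulas of $\varphi$.

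The core of the argument is then a \emph{truth lemma} for the unfolding: in the (possibly infinite) tree obtained by unfolding $\mathcal{M}'$ under the intended valuation, any node that is a copy of $n \in \mathcal{M}'$ satisfies exactly the subformulas in $t(n)$. I would prove this by induction on subformulas. The propositional and Boolean cases are immediate; the $\Box$-case uses the chosen witnesses for $\Diamond$-style obligations together with the fact that ancestor-type identification preserves the set of types realised in the upward cone; and the $\ast$-case uses the bifurcated witnesses, together with the fact that the unfolding is a genuine tree, so persistent separation and bifurcation along distinct subtrees hold automatically.

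The principal obstacle is the soundness of identifying worlds by type in the $\ast$-case, and this is precisely where $\mimp$-freeness is indispensable. For a $\mimp$-free formula, satisfaction at a world is determined entirely by the upward subtree above that world; so when a backlink from $n$ (representing $w$) to $n'$ (representing $w'$ with $t(w) = t(w')$) is unfolded, replacing the subtree above $n$ by a fresh copy of the subtree above $n'$ preserves truth because both subtrees induce the same type at their roots. The semantics of $\mimp$, in contrast, consults worlds \emph{below} the evaluation point via the triple $R(w_2, w, w_1)$ with $w_2 \leq w$, so type-based identification there would be unsound. Making this upward-locality invariant precise for $\mimp$-free subformulas and showing it is preserved by backlink-unfolding is the main technical step of the proof.
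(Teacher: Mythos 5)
Your overall strategy---filtration by subformula types, a depth bound of $2^{|\mathrm{sub}(\varphi)|}$ via backlinks to same-type ancestors, a truth lemma for the unfolding, and the observation that $\mimp$-freeness is what makes type-based identification sound---matches the paper's intent. However, the way you build the finite structure has a genuine gap. You install the chosen witnesses as \emph{fresh sibling children} of the current node and then rely on the fact that distinct subtrees of a tree are automatically separated. This manufactures separations that need not exist in $\mathcal{M}$, and so it can make a $\ast$-subformula true at a node whose type says it is false. Concretely, take $\varphi = \Diamond p \wedge \Diamond q \wedge \neg(p \ast q)$, which is $\mimp$-free and satisfied at the root of the chain $w_0 < w_1 < w_2$ with $p$ true only at $w_1$ and $q$ true only at $w_2$ (a chain contains no separated pair, so $p \ast q$ fails everywhere). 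Your procedure at the root discharges the two $\Diamond$-obligations by creating one child representing $w_1$ and one representing $w_2$; in the unfolded tree these children are incomparable, hence bifurcated above the root, so the root of your model satisfies $p \ast q$ and the truth lemma already fails for the negative $\ast$-clause. The universal (``for all bifurcated pairs'') half of the $\ast$-semantics is not preserved by a construction that rearranges comparabilities, only by one that preserves them.

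This is precisely where the paper's argument differs: it does not generate a new tree from witnesses but \emph{prunes} the given model, keeping two representatives of each subformula-type among the immediate successors of a node and deleting the remaining subtrees, and only afterwards truncates over-long branches by redirecting them, via a backlink, to an earlier world \emph{on the same branch} with the same type. Pruning never introduces new incomparabilities, so formulas $\neg(\phi_1 \ast \phi_2)$ survive for free, while retaining two representatives per type (rather than one, as in ordinary modal filtration) is exactly what preserves positive witnesses such as those for $\psi \ast \psi$. Your finiteness bound and your discussion of why $\mimp$ must be excluded are correct and would carry over, but the witness-selection step must be replaced by, or constrained to respect, the order structure of the original model.
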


\begin{proof}
 Let $\Sigma$ be the set of subformulae of $\varphi$. For a subset $\Sigma_0\subseteq\Sigma$, consider all immmediate successors of the root node that satisfy exactly those formulae in $\Sigma$ that are in $\Sigma_0$. If we remove all but two of such nodes and the branches stemming from them we obtain a different model that still satisfies $\varphi$ at the root. Note that we need to retain two nodes--unlike in standard modal logic where one representative suffices--as the root might satisfy a statement $\psi\ast\psi$ that enforces the existence of two \emph{different} successors, even with equal theories. Applying this idea repeatedly we can create a model of $\varphi$ that is finitely branching up to any given height.

On the other hand, if a branch is longer than $2^{|\Sigma|}$, we will encounter worlds $w<v$ satisfying exactly the same formulae from $\Sigma$. In this case, we can remove the branch stemming from $v$ and instead create a back link from the immediate predecessor of $v$ to $w$. This too preserves truth of $\varphi$ at the root.
In the end, we obtain a finitely branching model with backlinks that is of bounded height, and therefore finite.   
\end{proof}

The proof is a modification of a standard filtration and pruning argument used in other modal logics \cite{BdeRV2001}. These arguments rely on the fact that the truth of a formula depends only on the truth of its subformulae at successor nodes --- a property that fails once one considers backwards-looking modalities. This explains the restriction on $\df{wand}$.

\begin{corollary}
    It is decidable whether a $\df{wand}$-free formula is valid in $\df{BL}$. %\fillBox
\end{corollary}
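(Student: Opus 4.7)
The plan is to reduce validity to unsatisfiability and then exploit Lemma~\ref{lem:backlinks} to obtain a bounded search space. Since the defined connectives $\vee$, $\df{imp}$, $\df{eqv}$, $\Diamond$, $\top$, $\bot$ are all introduced without appeal to $\mimp$, the negation $\neg\varphi$ of a $\mimp$-free formula $\varphi$ is again $\mimp$-free. Hence $\varphi$ is valid iff $\neg\varphi$ is unsatisfiable in $\BL$, which by Lemma~\ref{lem:backlinks} is equivalent to saying that $\neg\varphi$ has no finite model with backlinks. It therefore suffices to give an effective satisfiability test for $\mimp$-free formulae on finite models with backlinks.

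First I would read off from the proof of Lemma~\ref{lem:backlinks} a computable upper bound on the size of a minimal satisfying model with backlinks. The pruning step retains at most two successors per subformula-theory, so the branching is bounded by $2\cdot 2^{|\Sigma|}$; the filtration shortens any branch of length exceeding $2^{|\Sigma|}$ by inserting a backlink, so the height is at most $2^{|\Sigma|}$, where $\Sigma$ is the set of subformulae of $\neg\varphi$. Up to isomorphism there are thus only finitely many such models, and they can be effectively enumerated.

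Second, given a candidate finite model with backlinks $\mathcal{M}$, I would decide whether $\neg\varphi$ holds at its root. Although satisfaction is defined via the (possibly infinite) unfolding $\mathcal{M}^*$, I would argue that for $\mimp$-free formulae the truth of any subformula at a node of $\mathcal{M}^*$ depends only on the node of $\mathcal{M}$ from which it originates, because the remaining connectives $\neg$, $\wedge$, $\Box$, $\ast$ all have forward-looking semantics and the future subtree of such a node is fully determined by $\mathcal{M}$. This justifies computing, by induction on subformulae, a labelling of the nodes of $\mathcal{M}$ by subsets of $\Sigma$: the Boolean clauses are local, while the clauses for $\Box$ (universal, forward) and $\ast$ (existential, forward) are handled by a greatest- and a least-fixed-point iteration that traverse $\mathcal{M}$ through its edges and backlinks and stabilise after at most $|\mathcal{M}|\cdot|\Sigma|$ rounds.

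The main obstacle is precisely the well-definedness claim used above: that satisfaction in the unfolding descends to a labelling of $\mathcal{M}$ itself. This is exactly where the restriction to $\mimp$-free formulae is essential, since $\mimp$'s semantics references a common ancestor below the evaluation point and then a separated branch above it, and this backward step would break the purely forward recursion on which the labelling relies. Once well-definedness is established, decidability follows by enumerating all finite models with backlinks up to the size bound, running the labelling algorithm on each, and declaring $\varphi$ valid iff $\neg\varphi$ fails at every root.
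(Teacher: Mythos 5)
Your proposal is correct, but it takes a genuinely different route from the paper. The paper's proof is a classic ``c.e.\ plus co-c.e.'' argument: by completeness (Theorem~\ref{thm:tbl-completeness}) the valid $\mimp$-free formulae are computably enumerable via proof search in $\TBL$, while by Lemma~\ref{lem:backlinks} the non-valid ones are computably enumerable by systematically generating \emph{all} finite models with backlinks (of every size) and model-checking each; running both enumerations in parallel yields decidability without ever bounding the size of a countermodel. You instead extract an explicit (roughly doubly exponential) size bound from the proof of Lemma~\ref{lem:backlinks} and obtain a single terminating search, bypassing the tableau calculus entirely. Your approach buys more --- a concrete decision procedure and, in principle, a complexity upper bound --- but it is also more fragile, since the bound must be read off from a proof the paper only sketches informally, and one must check that the pruning and the branch-shortening steps compose to give the stated height and branching bounds simultaneously. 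Both arguments share the obligation you rightly isolate as the main obstacle: that satisfaction of a $\mimp$-free formula in the unfolding descends to a computable labelling of the finite backlinked structure. The paper leaves this implicit in the phrase ``systematically generate all finite models with backlinks,'' whereas you make it explicit with the fixed-point labelling; spelling that step out would in fact strengthen the paper's own proof as well.
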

\begin{proof}
By Theorem~\ref{thm:tbl-completeness},
     the $\df{wand}$-free formulae of $\df{BL}$ are computably enumerable. Furthermore, Lemma~\ref{lem:backlinks} implies that  and it follows %from  
     that their complement is computably enumerable too, since we can systematically generate all finite models with backlinks. Thus, decidability follows.
 \end{proof}

\subsection*{Acknowledgements} 

\begin{enumerate} [leftmargin=6.5mm] 
\setlength\itemsep{-0.5mm}
\item Galmiche and M\'ery gratefully acknowledge the partial support of the ANR Projet NARCO (ANR-21-CE48-0011).

\item Lang and Pym gratefully  acknowledge the partial support of the UK EPSRC through Research Grants EP/S013008/1 and  EP/R006865/1. 

\item We thank the anonymous referees for their suggestions. 
\end{enumerate}

\bibliographystyle{eptcs}
%\begin{thebibliography}{99}
\bibliography{mybib}
%\end{thebibliography}

\end{document}